\documentclass[journal]{IEEEtran}
\IEEEoverridecommandlockouts

\usepackage[cmex10]{amsmath}
\usepackage{amsfonts}
\usepackage{cellspace}
\usepackage{makecell}
\setcellgapes{3.5pt}
\usepackage[caption=false,font=footnotesize]{subfig}
\usepackage{xcolor}
\usepackage{booktabs}
\usepackage{multirow}
\usepackage{makecell}
\usepackage{soul}
\usepackage{cite}
\usepackage{amssymb}
\usepackage{amsthm}

\newtheorem{remark}{Remark}
\newtheorem{claim}{Claim}

\usepackage{breqn}
\usepackage{algorithm} 
\usepackage{algpseudocode}

\usepackage{pstricks}
\newcommand{\norm}[1]{\lVert #1 \rVert}

\newcommand{\mads}[1]{{\red(Mads:)}}

\begin{document}

\title{Maximizing Power Flexibility of Hybrid Energy Systems for Capacity Market\\

\thanks{This material is based upon work supported by the U.S. Department of Energy’s Office of Energy Efficiency and Renewable Energy (EERE) under the Solar Energy Technologies Office Award Number DE-EE0010147. The views expressed herein do not necessarily represent the views of the U.S. Department of Energy or the United States Government.

Both the authors are with the Department of Electrical and Biomedical Engineering, University of Vermont, Burlington, VT 05405 USA. (emails: tmishra@uvm.edu, malmassa@uvm.edu)}
}

\author{
\IEEEauthorblockN{Tanmay Mishra, \IEEEmembership{Member, IEEE}, Mads R Almassalkhi, \IEEEmembership{Senior Member, IEEE}}
}
\maketitle
\vspace{-1.0cm}
\begin{abstract}
Hybrid Energy Systems (HES), integrating generation sources, energy storage, and controllable loads, are well-positioned to provide real-time grid flexibility. However, quantifying this maximum flexibility is challenging due to renewable generation uncertainty and the complexity of power allocation across multiple assets in real time. This paper presents a rule-based framework for characterizing HES flexibility and systematically allocating power among its constituent assets. The flexibility envelope defines the dynamic power boundary within which the HES can inject or absorb power without violating operational constraints. Shaped in real time by capacity bids, available solar generation, and power allocation protocol, it enables reliable and predictable HES participation in regulation markets. Depending on the operational objective, the framework supports both symmetric and asymmetric flexibility cases. Further, the proposed power-allocation rule is benchmarked against an optimal dispatch, providing a performance reference under realistic conditions. Finally, state of charge drift correction control is presented to ensure sustained battery operation and system reliability. This work, therefore, offers a rigorous and practical framework for integrating HES into capacity markets through effective flexibility characterization.
\end{abstract}

\begin{IEEEkeywords}
Power systems flexibility, hybrid energy systems, frequency regulation, capacity market.
\end{IEEEkeywords}
\vspace{-0.4cm}
\section{Introduction}
Modern power systems are shifting away from deterministic, dispatchable centralized power plants to more decentralized and variable energy sources, largely driven by evolving energy economics and decarbonization goals. Although this transition brings environmental benefits, it also introduces new challenges in maintaining grid planning and operation. For example, California's well-known `duck curve' highlights the requirement of balancing demand and generation during peak solar hours~\cite{Calero2022}. In Hawaii, high solar penetration has led to frequent curtailments simply because the system lacks the flexibility to absorb all of the generation~\cite{oshaughnessy2020}. Even advanced battery systems, often touted as a solution, have their limits. During the 2019 UK blackout, batteries were unable to provide the sustained support needed, exposing gaps in how standalone technologies handle complex disturbances~\cite{BIALEK2020}. 
In recent years, the rapid growth of data centers and their escalating power demands make it increasingly challenging for a single generation resource to synchronize their production with the consumption of the data centers \cite{Gnibga2024}.

This presents an opportunity for coordinated systems that can handle variability and offer flexible control over different timescales to make the grid more resilient~\cite{LiHao2018}. The critical need here is for grid flexibility, defined as the ability of the power system to maintain reliable operation despite variability and uncertainty in both electricity generation and demand. Hybrid energy systems (HES) give us that option by combining different energy resources in form of generation, storage, and controllable loads, which can respond autonomously to both sudden events and longer mismatches between supply and demand~\cite{Murphy2021, Tanmay2024}. 

Historically, grid flexibility was primarily ensured through large, dispatchable fossil fuel-based generators augmented by spinning reserves and hydropower ~\cite{palensky2011demand}. While these centralized resources provided reliable control on minute-to-hour timescales, their dependence on a few large units constrains the system’s flexibility in addressing the variability and uncertainty of today’s decentralized energy landscape. In recent years, the integration of distributed energy resources (DERs) and the emergence of virtual power plants (VPPs) represent a new dynamic and scalable solution to power system operations~\cite{Xie2020,Brahma2021,Brahma2023}. HES build on this evolution by providing rapid response via batteries and enhanced operational flexibility through the integration of diverse assets, including solar PV, controllable loads such as data centers~\cite{Wang2016}, EV charging stations~\cite{Karfopoulos2015} and process-driven loads like hydrogen electrolyzer~\cite{Sovana2024}, enabling both short-term adjustments and longer-duration energy management. 

While HES offer a promising pathway for grid flexibility, realizing their full potential requires overcoming significant stochastic and operational barriers. First, the intermittency of renewable generation and the varying constraints of controllable loads make it inherently difficult to quantify the maximum flexibility the system can reliably offer~\cite{Mohandes2019,Garcia2021}. Energy market participation further complicates matters, requiring precise forecasting and robust assessment of the HES's regulation capabilities to formulate accurate bids for electricity and ancillary service markets~\cite{mishra2025}. These challenges further creates a complex real-time power allocation challenge. Even though HES presents itself to the grid as a single dispatchable entity, yet internally, the aggregate power request must be in real-time and optimally distributed among diverse assets. Finally, physical sustainability poses a persistent threat to reliability. Even under theoretically energy-neutral regulation signals, round-trip efficiency losses and short-term energy imbalances inevitably cause battery's state of charge (SoC) drift~\cite{Dang2014}.

To overcome these challenges, this article proposes a novel framework to characterize the flexibility of HES through real-time disaggregation strategy that allocate power among individual assets. This real-time power allocation protocol is particularly advantageous for fast-responding markets like frequency regulation, offering the computational efficiency and responses crucial for real-time compliance. In this work, a flexibility envelope is defined quantifying the HES's real-time power dispatch capacity. This envelope is defined by integrating operational rules and PV curtailment. This paper further investigates the impact of solar power generation variability across different seasons on HES performance, and evaluates its behavior under different bid capacities using performance matrices. Finally, an adaptive SoC drift correction approach is proposed to ensure sustainable operation within battery limits during multi-hour participation.
 
The remainder of the paper is organized as follows. Section~\ref{system} presents the mathematical modeling of the HES components. Section~\ref{Flexibility} introduces the flexibility envelope concept and defines the characterization of maximum HES capacity alongside the real-time power allocation protocol. Further, an optimization framework is formulated to benchmark the rule-based dispatch against the optimal power dispatch. Section~\ref{sec:market_setting} describes the frequency regulation market setting. Section~\ref{sec:PJM_case_study} evaluates the performance of HES participation in frequency regulation market based on the defined flexibility envelope and proposed dispatch strategy as a case study. This section also introduces the adaptive SoC drift correction strategy designed to maintain operational limits during sustained, multi-hour participation. Finally, Section~\ref{Conclusion} summarizes key findings and outlines future work.

\section{Modeling of HES Components}
\label{system}
Fig.~\ref{HES_schematic} shows an HES connected to a utility grid. In this work, the HES comprises a solar PV as power generation source injecting power $P_{\text{pv}}$, a battery as energy storage with dispatchable power $P_{\text{batt}}$, and a controllable load absorbing power $P_{\text{CL}}$. The system is modeled for all discrete time steps $k \in \mathcal{K} = \{0,1,\dots,N-1\}$ of the operational window, where each $k$ denotes a discrete time index. Each asset $X$ has a bounded power consumption at $k$, denoted by $P_{\text{X}}[k] \in [\underline{P}_{\text{X}}, \overline P_{\text{X}}]$, where $\underline{P}_{\text{X}} = 0$ for $X \in \{\text{CL}, \text{PV}\}$ and $\underline{P}_{\text{X}} = -\overline P_{\text{X}}$ for $X = \text{batt}$. The power balance for the system is given by:
\begin{equation}
    P_{\text{hes}}[k] = P_{\text{PV}}[k] - P_{\text{CL}}[k] + P_{\text{batt}}[k],
    \label{eq:hes_power_bal}
\end{equation}
where $P_{\text{hes}}[k] > 0$ implies that the HES is injecting power into the grid at time $k$. For effective regulation of $P_{\text{hes}}[k]$ through asset coordination, simplified models are employed for each component. These models capture the essential power limits of the PV system, controllable load, and battery (including its state of charge), as described in the following subsections. 
\begin{figure}[t]
\centering
\includegraphics[width=0.48\textwidth]{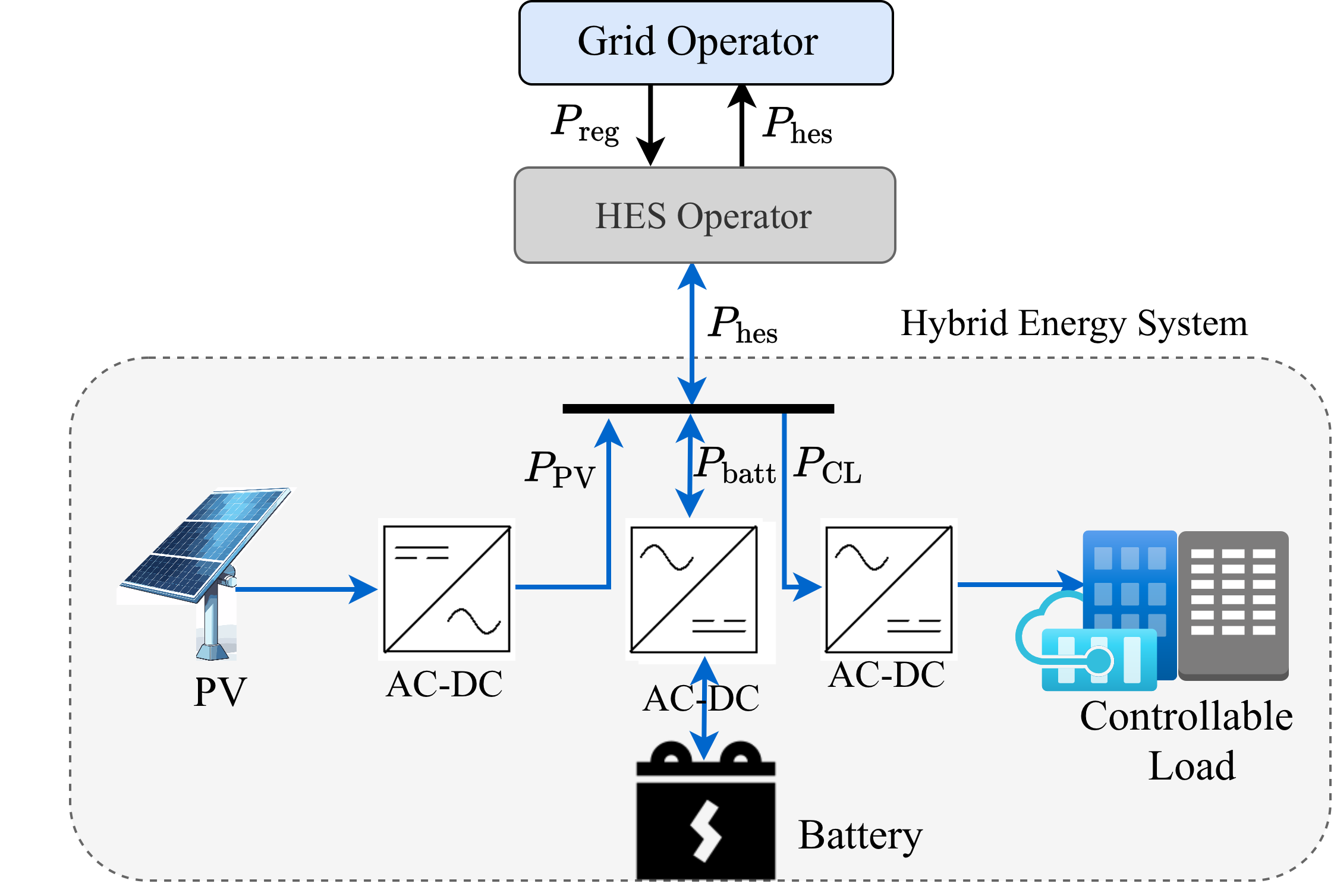}
\caption{Schematic diagram of a hybrid energy system interfacing PV, battery and a controllable load to the grid through inverters.}
\label{HES_schematic}
\centering
\end{figure}

\vspace{-0.25cm}
\subsection{PV model:}
The equivalent circuit model of a PV cell, as described in~\cite{Masters2004}, is used to represent solar power generation using irradiance data. The produced cell current $i_{\text{cell}}[k] $ is given by:
\begin{equation}
    i_{\text{cell}}[k] = i_{\text{sc}}[k]-I_{\text{0}} \left(e^{38.9v^{\text{pv}}_{\text{oc}}[k]}-1\right)-\left(\frac{v^{\text{pv}}_{\text{oc}}[k]}{R_{\text{P}}}\right)
    \label{PV_current}
\end{equation}
\noindent where $i_{\text{sc}}[k]$ is the short circuit current and $v^{\text{pv}}_{\text{oc}}[k]$ is the open circuit terminal voltage. The constant value of 38.9 is calculated at the operating temperature of 300K from the charge and Boltzmann's constant. $R_{\text{P}}$ and $R_{\text{S}}$ denote losses due to leakage and interconnections. The $i_{\text{sc}}[k]$ is proportional to the irradiance $S_{\text{I}}[k]$. The cell voltage $v_{\text{cell}}[k]$ and power $P_{\text{cell}}[k]$ are given by:
\begin{subequations}
\begin{align}
v_{\text{cell}}[k] &= v^{\text{pv}}_{\text{oc}}[k]-i_{\text{cell}}[k]R_{\text{S}}\\
P_{\text{cell}}[k] &= i_{\text{cell}}[k]v_{\text{cell}}[k]\\
P_{\text{PV}}[k] &=  \eta _{\text{PV}}N_{\text{cell}}P_{\text{cell}}[k]
\end{align}
\label{PV_Cell_Power}
\end{subequations}
\noindent where $\eta _{\text{PV}}$ is PV inverter efficiency and $N_{\text{cell}}$ corresponds to the aggregate number of PV cells to represent a scaled PV arrays i.e. bulk PV system, configured through the combination of series connected modules, parallel strings, and the total number of PV arrays. Hence,  $P_{\text{PV}}[k]$ denotes the total solar power generated at each discrete time step $k$ .

\subsection{Battery model:}
The Energy Reservoir Model (ERM) is commonly used in system-level battery dispatch scheduling and flexibility studies~\cite{Rosewater2019, Arash2025}. In this model, the battery's SoC, $E[k+1] \in [\underline{E}, \overline{E}]$, depend on the charging power $P^{\text{c}}_{\text{batt}}[k]$ and discharging power $P^{\text{d}}_{\text{batt}}[k]$,  with the total battery power is given as $P_{\text{batt}}[k] =P^{\text{c}}_{\text{batt}}[k]+P^{\text{d}}_{\text{batt}}[k]$. In this case, charging is treated as power consumption i.e., $P^{\text{c}}_{\text{batt}}[k] \le 0$, whereas discharging is injecting power i.e., $P^{\text{d}}_{\text{batt}}[k] \ge 0$.

The SoC defined by its capacity $E_{\text{C}}$ in MWh. The SoC of the battery is modeled as follows~\cite{Ferreira2019}:
\begin{subequations}
\begin{equation}
E[k+1] = E_{0} - \frac{\Delta t}{E_{\text{C}}} \sum_{j=1}^{k} \left( \eta^{\text{batt}}_{\text{i}} P^{\text{c}}_{\text{batt}}[j] + \frac{1}{\eta^{\text{batt}}_{\text{i}}} P^{\text{d}}_{\text{batt}}[j] \right)
\label{SoC_dynamics_1}
\end{equation}
\begin{equation}
\text{SoC limits}:~ \underline{E} \le E[k+1] \le \overline{E} \label{SoC_limits},
\end{equation}

\begin{equation}
-\overline{P}_{\text{batt}} \le P^{\text{c}}_{\text{batt}}[k] \le 0, ~ 0\le P^{\text{d}}_{\text{batt}}[k] \le \overline{P}_{\text{batt}}\label{P_batt}
\end{equation}
\label{battery model}
\end{subequations}

where \eqref{SoC_limits} and \eqref{P_batt} define the battery’s SoC and charging/discharging limits, respectively. This formulation captures energy conversion losses via the inverter efficiency as $\eta^{\text{batt}}_{\text{i}}$. The efficiency of these inverters is assumed to be constant at $\eta^{\text{batt}}_{\text{i}}=~0.95$, based on typical manufacturer datasheet.

\subsection{Controllable load}
The controllable load models dispatchable demand-side resources such as data centers, process-driven industrial loads (e.g., electrolyzer), or electric vehicle charging stations. The instantaneous power consumption of the load, denoted by $P_{\text{CL}}[k]$, is constrained by its rated capacity:
\begin{equation}
    0 \leq P_{\text{CL}}[k] \leq \overline{P}_{\text{CL}}\,,
\end{equation}
where $\overline{P}_{\text{CL}}$ represents the maximum allowable power consumption. For this study, it is assumed that the controllable load is fully dispatchable with negligible ramping limitations, i.e., it can increase or decrease its power consumption instantaneously within the feasible bounds. This assumption allows the load to actively participate in flexibility provision in conjunction with battery storage and generation assets. By integrating these models of generation, controllable load, and battery, the subsequent section formalizes the notion of flexibility and characterizes the capacity of the HES to provide maximum flexibility.

\section{Flexible HES dispatch}
\label{Flexibility}
One of the fundamental requirements for reliable power system operation is maintaining the balance between net demand and generation. Flexibility, in this context, refers to the ability of the system to absorb or deliver power to maintain grid stability by providing grid services such as frequency regulation~\cite{Mohandes2019}. Key attributes of flexibility include ramp rate, power capacity, and energy capacity, which together define how quickly and for how long a resource can respond~\cite{Makarov2009}. Power capacity defines the upper and lower limits of the power dispatchable to the grid, while energy capacity determines the duration for which a resource can sustain flexibility provision.

As grid operators seek to integrate more renewable and distributed resources, HES that combine generation, storage, and flexible loads offer new opportunities to deliver flexibility. Rather than managing each component independently, treating the HES as a unified resource with symmetric up/down regulation capability ($\pm C$ MW) simplifies grid interaction and bidding strategies. The key assumption here is that every asset can deliver any power within that region. However, determining the maximum symmetric regulation capacity $\pm \Delta P_{\text{hes}}$  that can be reliably offered each hour is challenging. The intermittency of solar generation introduces variability in available power, directly affecting system flexibility. This uncertainty complicates internal dispatch decisions and raises a key question: \textit{What is the largest symmetric flexibility the HES can commit while ensuring feasible operation across all components?} We address this question by formally defining and characterizing the symmetric HES capacity for flexibility, as elaborated subsequently.

For an HES, the net dispatch power $P_{\text{hes}}[k]$ can be represented as the sum of a nominal output $P_0[k]$, referred to as virtual generation, and a flexible component $\Delta P_{\text{hes}}[k]$ representing its ability to ramp its output up or down like a virtual battery. This decomposition yields:
\begin{equation}
P_{\text{hes}}[k] = P_0[k] + \Delta P_{\text{hes}}[k].
\label{HES power}
\end{equation}
Operationally, the HES maintains a nominal output at $P_0[k]$, while actively adjusting its power by $\pm \Delta P_{\text{hes}}[k]$ to follow the regulation signal. This flexible operating range is conceptually represented as a ``flexibility envelope"~\cite{Nosair2015}, within which the HES must be capable of delivering or absorbing power at any given time. In the considered HES, the PV output $P_{\text{PV}}[k]$ is treated as a non-dispatchable input, while the controllable load $P_{\text{CL}}[k]$ and battery $P_{\text{batt}}[k]$ are fully dispatchable within their rated limits $\overline{P}_{\text{CL}}$ and $\overline{P}_{\text{batt}}$, respectively. Unless curtailment is explicitly permitted, this reflects common operational practice in which solar generation is treated as non-dispatchable, while the battery and controllable load are actively scheduled. Under this formulation, the flexibility envelope is characterized in terms of the admissible deviation $\Delta P_{\text{hes}}[k]$ around $P_0[k]$. Having established the maximal realizable flexibility sets under different operating scenarios, the next step is to define a real-time allocation policy that maps a desired HES dispatch to feasible setpoints for individual assets.

\subsection{Real-Time Power Allocation Protocol}
\label{subsec:allocation_protocol}
This subsection presents a real-time power allocation protocol that disaggregates the net HES dispatch $P_{\text{hes}}[k]$ into component-level setpoints while respecting the flexibility envelopes. The objective of the allocation is to exactly realize any admissible $\Delta P_{\text{hes}}[k]$ using feasible combinations of controllable load and battery power, without violating component power or energy constraints. To achieve maximum symmetric flexibility in the absence of PV curtailment and without enforcing purely renewable-driven operation of the controllable load, the allocation strategy must prioritize how flexibility is shared between the controllable load and the battery. Giving priority to the controllable load is advantageous due to its intrinsic process value (e.g., hydrogen production), while limiting unnecessary high-frequency charge-discharge cycling of the battery to preserve its lifetime and sustain long-term regulation capability. Consequently, the power allocation strategy emphasizing controllable load utilization is given by,
\begin{subequations}
\begin{align}
    P_{\text{CL}}[k] &:= \biggl[P_{\text{PV}}[k] - P_0[k]-\Delta P_{\text{hes}}[k]\biggr]^{\overline{P}_{\text{CL}}}_0 \label{CL_load_disp}\\
    P_{\text{batt}}[k] &:= \biggl[P_0[k]+\Delta P_{\text{hes}}[k]- P_{\text{PV}}[k]+ P_{\text{CL}}[k]\biggr]^{\overline{P}_{\text{batt}}}_{\overline{P}_{\text{batt}}}\label{batt_disp}
\end{align}
\label{eq:power_alloc}
\end{subequations}
This allocation is valid provided that the controllable load is independent of the PV output, i.e., it is not required to be exclusively powered by the PV system.
\begin{proof}
Fix any $k \in \mathcal{K}$ and any feasible $\Delta P_{\text{hes}}[k]$ in flexibility set i.e. $\Delta P_{\text{hes}}[k] \in \mathcal{P}$ and consider the allocation rules \eqref{CL_load_disp}--\eqref{batt_disp}.\\
\textit{Power feasibility:}
By construction, both allocations use limits, which ensure
\[
0 \le P_{\text{CL}}[k] \le \overline P_{\text{CL}}, \qquad
-\overline P_{\text{batt}} \le P_{\text{batt}}[k] \le \overline P_{\text{batt}}.
\]
\textit{Power balance:}
Substituting \eqref{CL_load_disp}--\eqref{batt_disp} into the HES power balance \eqref{eq:hes_power_bal} gives
\[
P_{\text{PV}}[k] - P_{\text{CL}}[k] + P_{\text{batt}}[k]
= P_0[k] + \Delta P_{\text{hes}}[k],
\]
where the limits does not activate for
$\Delta P_{\text{hes}}[k] \in \mathcal{P}$ by definition of the flexibility set.\\
\textit{SoC feasibility:}
Since $|P_{\text{batt}}[k]| \le \overline P_{\text{batt}}$ and
$\text{SoC}[k]$ is assumed to satisfy operational bounds,
the battery state at $k+1$ remains feasible under the standard SoC dynamics.
Therefore, every $\Delta P_{\text{hes}}[k] \in \mathcal{P}$ is realizable,
which completes the proof.
\end{proof}
In this strategy, once the controllable load reaches its rated power limit, the remaining power is allocated to the battery. Further, under this power allocation rule, we can define the symmetric HES flexibility. Note that for different scenario as shown in Table~\ref{tab:comparison}, power allocation might follow different strategies.

\subsection{Scenario 1: symmetric HES flexibility with no curtailment}

\begin{claim} \label{claim_Max_HES}[Maximal Symmetric HES Flexibility without PV curtailment]
Consider an HES composed of a battery $P_{\text{batt}}[k]\in [-\overline P_{\text{batt}},\overline P_{\text{batt}}]$, a controllable load $P_{\text{CL}}[k]\in [0,\overline P_{\text{CL}}]$, and non-dispatchable PV generation $P_{\text{PV}}[k]$. 
If nominal power output $P_0[k] := P_{\text{PV}}[k]-\frac{1}{2}  \overline P_{\text{CL}}$, then the maximum symmetric range for flexibility of the HES is $\mathcal{\overline{P}}$, where 
\begin{align}
    \mathcal{\overline{P}} := \bigg[ -\overline P_{\text{batt}} - \frac{1}{2} \overline \Delta P_{\text{hes}}[k] \in P_{\text{CL}},
    \overline P_{\text{batt}} + \frac{1}{2} \overline P_{\text{CL}} \bigg].
\end{align}
\end{claim}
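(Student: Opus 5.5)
With $P_0[k]=P_{\text{PV}}[k]-\tfrac12\overline P_{\text{CL}}$, I read the intended interval (the lower endpoint in the statement is garbled) as
\[
\overline{\mathcal{P}}=\bigl[-\overline P_{\text{batt}}-\tfrac12\overline P_{\text{CL}},\;\overline P_{\text{batt}}+\tfrac12\overline P_{\text{CL}}\bigr].
\]
The plan is to compute the exact set of achievable $\Delta P_{\text{hes}}[k]$ from the power balance \eqref{eq:hes_power_bal}. I then show it is an interval, and intersect it with its reflection about zero. The largest symmetric subinterval is the claimed set.

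\textbf{Step 1: reachable set.} By \eqref{eq:hes_power_bal} and \eqref{HES power}, any feasible dispatch satisfies
\[
\Delta P_{\text{hes}}[k]=P_{\text{PV}}[k]-P_0[k]-P_{\text{CL}}[k]+P_{\text{batt}}[k]=\tfrac12\overline P_{\text{CL}}-P_{\text{CL}}[k]+P_{\text{batt}}[k],
\]
since PV is non-dispatchable and not curtailed. Here $P_{\text{CL}}[k]$ ranges over $[0,\overline P_{\text{CL}}]$ and $P_{\text{batt}}[k]$ over $[-\overline P_{\text{batt}},\overline P_{\text{batt}}]$, independently. The right-hand side is affine in the two variables on a box, so its image is the interval $[\tfrac12\overline P_{\text{CL}}-\overline P_{\text{CL}}-\overline P_{\text{batt}},\;\tfrac12\overline P_{\text{CL}}+\overline P_{\text{batt}}]$, which is exactly $\overline{\mathcal{P}}$. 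The endpoints are attained at the box corners $(P_{\text{CL}},P_{\text{batt}})=(\overline P_{\text{CL}},-\overline P_{\text{batt}})$ and $(0,\overline P_{\text{batt}})$.

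\textbf{Step 2: maximality.} Any symmetric interval $[-C,C]$ of admissible deviations must lie inside the reachable set from Step 1, which forces $C\le \overline P_{\text{batt}}+\tfrac12\overline P_{\text{CL}}$. I will also remark that this choice of $P_0$ is the unique one making the reachable set centered at zero. For a general $P_0$ the reachable set is $[P_{\text{PV}}-P_0-\overline P_{\text{CL}}-\overline P_{\text{batt}},\,P_{\text{PV}}-P_0+\overline P_{\text{batt}}]$. Its largest symmetric subinterval has half-width equal to the minimum of the two endpoint magnitudes, and this is maximized exactly when $P_{\text{PV}}-P_0=\tfrac12\overline P_{\text{CL}}$. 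So the claimed range is the maximum over choices of $P_0$ as well.

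\textbf{Step 3: realizability via the protocol.} I would verify that the allocation \eqref{eq:power_alloc} realizes every $\Delta P_{\text{hes}}\in\overline{\mathcal{P}}$. Substituting $P_0$ gives $P_{\text{CL}}=[\tfrac12\overline P_{\text{CL}}-\Delta P_{\text{hes}}]_0^{\overline P_{\text{CL}}}$. I then split into three cases:
\begin{itemize}
\item $|\Delta P_{\text{hes}}|\le\tfrac12\overline P_{\text{CL}}$: the load alone absorbs the deviation and $P_{\text{batt}}=0$.
\item $\Delta P_{\text{hes}}>\tfrac12\overline P_{\text{CL}}$: the load saturates at $0$ and the battery supplies $\Delta P_{\text{hes}}-\tfrac12\overline P_{\text{CL}}\le\overline P_{\text{batt}}$.
\item $\Delta P_{\text{hes}}<-\tfrac12\overline P_{\text{CL}}$: the load saturates at $\overline P_{\text{CL}}$ and the battery charges with magnitude at most $\overline P_{\text{batt}}$.
\end{itemize}
In every case the battery clip is inactive, so the power balance holds exactly, as in the proof following \eqref{eq:power_alloc}.

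\textbf{Main obstacle.} The main difficulty is interpretive rather than technical: the statement's lower endpoint is typographically corrupted, and SoC limits are not part of the claim. I would state explicitly that SoC feasibility is assumed, as in the allocation proof, so that only the power limits determine the envelope.
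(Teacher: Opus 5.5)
Your proposal is correct and follows essentially the paper's route. Realizability comes from the allocation protocol \eqref{eq:power_alloc}. Maximality holds because any $\epsilon$-extension of the symmetric interval pushes $P_{\text{hes}}[k]$ beyond the physical extremes $P_{\text{PV}}[k]+\overline P_{\text{batt}}$ and $P_{\text{PV}}[k]-\overline P_{\text{CL}}-\overline P_{\text{batt}}$, which are the endpoints of your reachable set shifted by $P_0[k]$. Your three-case check that the battery clip never activates goes beyond the paper, which only asserts this. So does your remark that this $P_0[k]$ is the unique nominal point maximizing the symmetric half-width, which the claim does not state.
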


\begin{proof}
Proving the claim requires showing
\begin{enumerate}
    \item Any symmetric expansion of $\overline{\mathcal{P}}$ yields an unrealizable $\Delta P_{\text{hes}}$. 
    \item Every value in $\mathcal{\overline{P}}$ is realizable.
\end{enumerate}
Conversely, the realizability of every value in  $\mathcal{\overline{P}}$ (the second condition) is demonstrated through the proposed power allocation protocol. Specifically, the allocation scheme maps any $P[k] \in \overline{\mathcal{P}}$ to feasible battery and controllable load dispatch that satisfy all component-level constraints.


Extending $\mathcal{\overline{P}}$ symmetrically by any $\epsilon > 0$ results in unrealizable power values. Consider the extended range $\mathcal{P}_\epsilon$. For the upper bound, $\Delta P'_{\text{hes}}[k] = \overline P_{\text{batt}} + \frac{1}{2} \overline P_{\text{CL}} + \epsilon$ and corresponding $P'_{\text{hes}}[k] = \overline P_{\text{batt}} + P_{\text{PV}}[k] + \epsilon$. The maximum achievable power is $\overline P_{\text{hes}}[k] = \overline P_{\text{batt}} + P_{\text{PV}}[k]$. Since $\epsilon > 0$, $P'_{\text{hes}}[k] > \overline P_{\text{hes}}[k]$, and thus $P'_{\text{hes}}[k]$ is not realizable. Similarly, for the lower bound, let the extended range be $\Delta P''_{\text{hes}}[k] = -\overline P_{\text{batt}} - \frac{1}{2} \overline P_{\text{CL}} - \epsilon$. The corresponding HES power is $P''_{\text{hes}}[k] = P_{\text{PV}}[k] - \overline P_{\text{CL}} - \overline P_{\text{batt}} - \epsilon$. The minimum physical power limit is $\underline{P}_{\text{hes}}[k] = P_{\text{PV}}[k] - \overline P_{\text{CL}} - \overline P_{\text{batt}}$. Since $\epsilon > 0$, then $P''_{\text{hes}}[k] < \underline{P}_{\text{hes}}[k]$, confirming that $P''_{\text{hes}}[k]$ is also not realizable. Therefore, $\mathcal{\overline{P}}$ represents the maximal symmetric capacity for HES flexibility, as any expansion includes unrealizable power values. 
\end{proof}

\begin{remark}
Note that the above Claim~\ref{claim_Max_HES} assumes that PV generation is non-dispatchable, while the load is controllable. If controllability is instead assigned to the generation resource, the resulting $\overline{ \mathcal{P}}$ can be derived by judiciously designing $P_0[k]$. The $\Delta P_{\text{hes}}[k]$ in that case will determine by accurate forecast of PV generation which is beyond the scope of this work. Note that this claim is defined as case $S_{1}$ in Table~\ref{tab:comparison}.
\end{remark}

\subsection{Scenario 2: symmetric HES with generation consumed}

In previous case, the key point is that the controllable load operates at half of its rated capacity, regardless of the available PV generation; i.e., the load may rely on grid power, which is not necessarily clean. However, if the objective is to maximize symmetric flexibility while ensuring the controllable load is powered exclusively by renewable generation, the nominal power and the corresponding symmetric flexibility needs to be defined differently. In that case, the load is dispatched at half of the available PV, i.e., $P_{0}[k] =\frac{1}{2} \min \left( P_{\text{PV}}[k], \overline P_{\text{CL}}\right)$ and flexibility of the HES is 

\begin{align}
    \Delta P_{\text{hes}}[k] \in \bigg[ -\overline P_{\text{batt}} - \frac{1}{2} \min\{ P_{\text{PV}}[k], \overline P_{\text{CL}\}}, \nonumber \\
    \overline P_{\text{batt}} + \frac{1}{2} \min\{ P_{\text{PV}}[k], \overline P_{\text{CL}}\} \bigg] = \mathcal{\overline{P}_{\text{H}}}.
\end{align}

This adjustment ensures purely renewable-driven operation of the controllable load imposes stricter constraints on HES capacity. This case is defined as $S_{2}$ in Table~\ref{tab:comparison}. For $S_2$, the power allocation is defined through a linear relationship between $P_{\text{CL}}[k]$ and $P_{\text{hes}}[k]$, established using two operating points as shown in Fig.~\ref{Power_allocation}. This formulation assumes $\Delta P_{\text{hes}} \in \mathcal{P}_{\text{H}}$, $P_0[k] = \frac{1}{2}P_{\text{PV}}[k]$, and $\overline{P}_{\text{CL}} \geq P_{\text{PV}}[k] \geq 0 \ \forall~k$. Specifically, the maximum upward capacity corresponds to $P_{\text{PV}}[k] + \overline{P}_{\text{batt}}$, achieved when the controllable load is off (i.e., $P_{\text{CL}}[k] = 0$), while the maximum downward capacity is $-\overline{P}_{\text{batt}}$, achieved when the load consumes all available PV power (i.e., $P_{\text{CL}}[k] = P_{\text{PV}}[k]$).
\begin{figure}[!htb]
\centering
\includegraphics[width=0.45\textwidth]{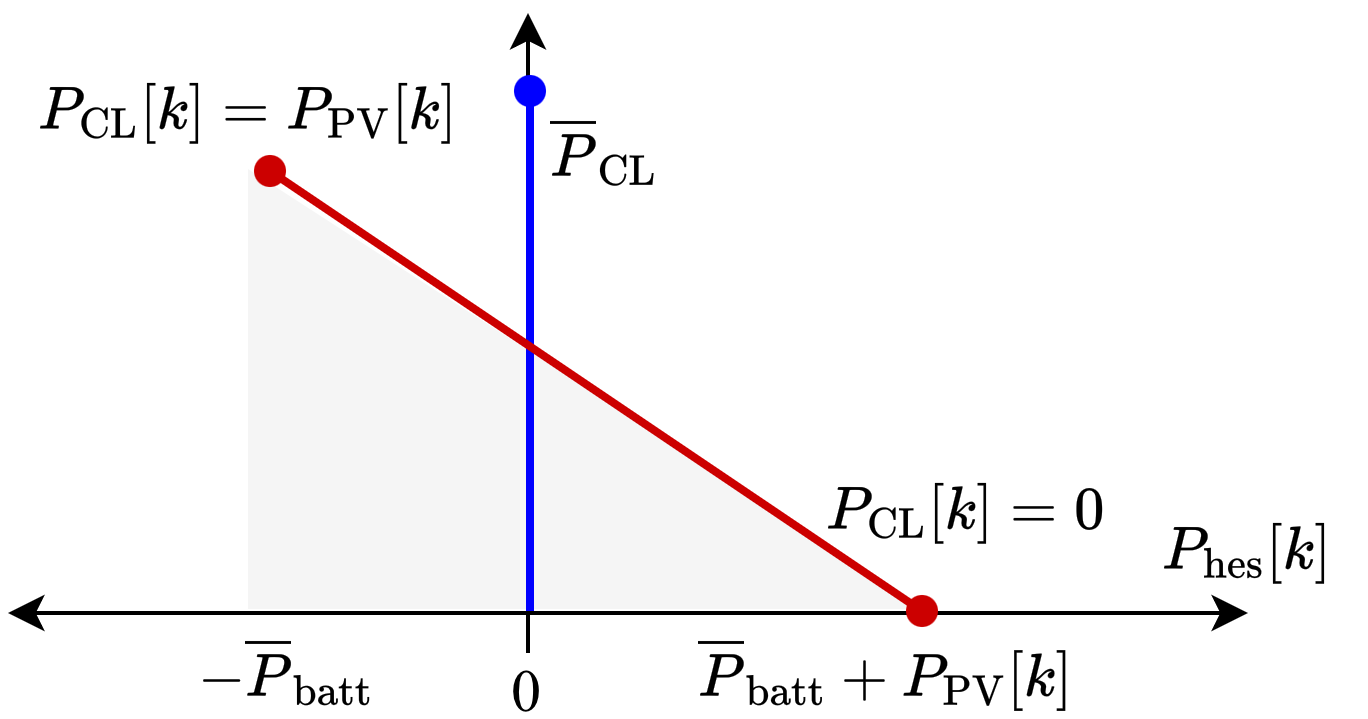}
\caption{Controllable load power allocation for green hydrogen production, assuming $\overline{P}_{\text{CL}}\ge P_{\text{PV}}[k]$ with the maximum flexibility of $\pm \overline{P}_{\text{batt}}\pm P_{\text{PV}}[k]$ and $P_0[k] = \frac{1}{2}P_{\text{PV}}[k]$.}
\label{Power_allocation}
\centering
\end{figure}
Based on this linear relationship, following power allocation protocols are defined for $P_{\text{CL}}[k]$ and $P_{\text{batt}}[k]$:
\begin{subequations}
\begin{align}
    P_{\text{CL}}[k] &:= \biggl[P_{\text{PV}}[k]\left(\frac{\overline P_{\text{batt}} + \frac{1}{2}P_{\text{PV}}[k]-\Delta P_{\text{hes}}[k]}{2\overline P_{\text{batt}} + P_{\text{PV}}[k]} \right)\biggr]^{\overline{P}_{\text{CL}}}_0 \label{CL_power} \\
    P_{\text{batt}}[k] &:= \biggl[\overline P_{\text{batt}}\left(\frac{ 2\Delta P_{\text{hes}}[k]}{2\overline P_{\text{batt}} + P_{\text{PV}}[k]}\right)\biggr]^{\overline{P}_{\text{batt}}}_{-\overline{P}_{\text{batt}}} \label{batt_power}
\end{align}
\label{eq:power_alloc_2}
\end{subequations}
This allocation also follows the allocation rules defined by \eqref{eq:power_alloc}. 

\subsection{Scenario 3: HES demand matches supply}
\begin{remark}
On the other hand, if general demand process output (e.g. hydrogen production from electrolyzer) is very valuable, we can maximize this output by matching $P_{\text{CL}}[k] = P_{\text{PV}}[k]$, which reduces flexibility available to HES. The new nominal HES output is $P_0[k] = 0$ around which the flexibility range is decreased to
$\mathcal{P}_{\text{min}} := \left[ -\overline P_{\text{batt}}, \overline P_{\text{batt}}\right],
$
which means that the HES just represents a traditional battery from the perspective of grid operators as defined in case $S_{3}$ in Table~\ref{tab:comparison}.
\end{remark}
 
\begin{table*}[!htb]
\centering
\renewcommand{\arraystretch}{1.3}
\caption{Nominal Power and Flexibility Bounds of a Hybrid Energy System Under Different PV Scenarios}
\label{tab:comparison}
\begin{tabular}{|c|c|c|c|c|}
\hline
\thead{Scenario} & \thead{$P_0[k]$} & \thead{$\Delta P_{\text{hes}}$} & \thead{Dispatch Implications} & \thead{Response Profile} \\
\hline
\multirow{3}{*}{No PV Curtailment} 
& $P_{\text{PV}}[k] - \frac{1}{2}\overline{P}_{\text{CL}}$ 
& $\pm \overline{P}_{\text{batt}} \pm \frac{1}{2}\overline{P}_{\text{CL}}$ 
& $S_1$=: Max Flex – No PV Curtailment 
& Symmetric \\
& $\frac{1}{2} \min(P_{\text{PV}}[k], \overline{P}_{\text{CL}})$ 
& $\pm \overline{P}_{\text{batt}} \pm \frac{1}{2} \min(P_{\text{PV}}[k], \overline{P}_{\text{CL}})$ 
& $S_2$: Flex Sustainable Load 
& Symmetric \\
& $0$ 
& $\pm \overline{P}_{\text{batt}}$ 
& $S_3$: Min Flex / Max Sustainable Load 
& Symmetric \\
\hline
\multirow{2}{*}{PV Curtailment Allowed} 
& $\frac{1}{2} \big(P_{\text{PV}}[k] - \overline{P}_{\text{CL}}\big)$ 
& $\pm \overline{P}_{\text{batt}} \pm \frac{1}{2} \big(P_{\text{PV}}[k] + \overline{P}_{\text{CL}}\big)$ 
& $S_4$: Max Symmetric Flexibility 
& Symmetric \\
& $0$ 
& $\big[-\overline{P}_{\text{batt}} - \overline{P}_{\text{CL}},~\overline{P}_{\text{batt}} + P_{\text{PV}}[k]\big]$ 
& $S_5$: Max Flex / Max PV Curtailment 
& Asymmetric \\
\hline
\end{tabular}
\end{table*}

\subsection{Scenario 4: symmetric HES flexibility with curtailment}

\begin{claim}
Clearly, Under $\overline{\mathcal{P}}$, $\overline{\mathcal{P}_{\text{H}}}$ and $\mathcal{P}_{\text{min}}$, solar PV generation is not curtailed; since controllable load absorbs solar PV or it is injected into grid. If we permit curtailment (i.e., if flexibility is far more valuable than solar PV), then we can increase symmetric capacity with $P_0[k] = \frac{1}{2}(P_{\text{PV}}[k]-\overline{P}_{\text{CL}})$ and flexibility range increased to
$$\mathcal{P}_{\text{curt}} := \left[ -\overline P_{\text{batt}} - \frac{1}{2}(P_{\text{PV}}[k]+\overline{P}_{\text{CL}}), \overline P_{\text{batt}} + \frac{1}{2}(P_{\text{PV}}[k]+\overline{P}_{\text{CL}})\right]
$$

Here, the factor $\frac{1}{2}$ arises directly from allowing partial solar PV curtailment. 
\end{claim}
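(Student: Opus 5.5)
The plan mirrors the proof of Claim~\ref{claim_Max_HES}. Curtailment turns PV into a dispatchable asset $P_{\text{PV}}^{\text{c}}[k]\in[0,P_{\text{PV}}[k]]$, where $P_{\text{PV}}[k]$ now denotes the available solar power. Then \eqref{eq:hes_power_bal} says the HES output ranges over a Minkowski sum of three intervals:
\[
\underline P_{\text{hes}}[k] = -\overline P_{\text{CL}}-\overline P_{\text{batt}}, \qquad \overline P_{\text{hes}}[k] = P_{\text{PV}}[k]+\overline P_{\text{batt}}.
\]
The lower extreme takes PV fully curtailed, the load at rating and the battery charging at full power. The upper extreme takes no curtailment, the load off and the battery discharging at full power. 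Because each asset interval is convex and $0$ lies in each, the attainable set is exactly the full interval $[\underline P_{\text{hes}}[k],\overline P_{\text{hes}}[k]]$. The largest symmetric interval around some $P_0[k]$ inside it is therefore centered at the midpoint:
\[
P_0[k]=\tfrac12\big(\underline P_{\text{hes}}[k]+\overline P_{\text{hes}}[k]\big)=\tfrac12\big(P_{\text{PV}}[k]-\overline P_{\text{CL}}\big).
\]
Its half-width is $\overline P_{\text{batt}}+\tfrac12(P_{\text{PV}}[k]+\overline P_{\text{CL}})$, which gives $\mathcal{P}_{\text{curt}}$. This also explains the factor $\tfrac12$: PV is only partially curtailed at the nominal point.

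\emph{Step 1 (maximality).} Take $\Delta P_{\text{hes}}[k]=\pm\big(\overline P_{\text{batt}}+\tfrac12(P_{\text{PV}}[k]+\overline P_{\text{CL}})+\epsilon\big)$ with $\epsilon>0$. The corresponding output $P_0[k]+\Delta P_{\text{hes}}[k]$ equals $\overline P_{\text{hes}}[k]+\epsilon$ or $\underline P_{\text{hes}}[k]-\epsilon$, both unrealizable, exactly as in the $\epsilon$-argument of Claim~\ref{claim_Max_HES}. I will also note that any other choice of $P_0[k]$ makes one side's distance to the boundary strictly smaller, so no symmetric interval is wider.

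\emph{Step 2 (realizability).} I give an explicit allocation extending \eqref{eq:power_alloc} with a curtailment stage placed first in the priority order, preceding the load and then the battery. Let $r=P_0[k]+\Delta P_{\text{hes}}[k]$ be the requested output. Set
\[
P_{\text{CL}}[k]=\big[-\big(r-\overline P_{\text{batt}}\big)\big]_0^{\overline P_{\text{CL}}}, \qquad P_{\text{PV}}^{\text{c}}[k]=\big[\,r-\overline P_{\text{batt}}+P_{\text{CL}}[k]\,\big]_0^{P_{\text{PV}}[k]},
\]
and let the battery absorb the residual,
\[
P_{\text{batt}}[k]=r-P_{\text{PV}}^{\text{c}}[k]+P_{\text{CL}}[k].
\]
A simpler sequential rule also works: fill PV first, then shed load, then use the battery.

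The main obstacle is showing the residual battery setpoint stays in $[-\overline P_{\text{batt}},\overline P_{\text{batt}}]$, i.e.\ that the saturations never leave unmet power. I will handle this by cases on $r$. The argument is essentially a greedy fill of consecutive intervals, and the endpoints of $\mathcal{P}_{\text{curt}}$ coincide with $\underline P_{\text{hes}}[k]$ and $\overline P_{\text{hes}}[k]$, so there is no slack to spare. On the two extreme regions the battery sits at its limit and the other assets take up the remainder. On the middle region the clamps are inactive and balance holds exactly. SoC feasibility follows as in the proof of the allocation protocol, assuming $E[k]$ is within bounds.
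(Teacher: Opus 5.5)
Your Step 1 is exactly the paper's proof. The paper pushes each endpoint of $\mathcal{P}_{\text{curt}}$ out by $\epsilon$, obtains $P_{\text{PV}}[k]+\overline P_{\text{batt}}+\epsilon$ or $-\overline P_{\text{CL}}-\overline P_{\text{batt}}-\epsilon$, and stops. What you add is useful. Writing the attainable output as the Minkowski sum $[0,P_{\text{PV}}[k]]+[-\overline P_{\text{CL}},0]+[-\overline P_{\text{batt}},\overline P_{\text{batt}}]$ yields two facts the paper never establishes for this scenario. First, the midpoint $P_0[k]=\tfrac12(P_{\text{PV}}[k]-\overline P_{\text{CL}})$ is the best center over all choices; the paper simply fixes $P_0[k]$. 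Second, every point of $\mathcal{P}_{\text{curt}}$ is realizable; the paper gives no curtailment-aware allocation and leaves this implicit. The hypothesis that $0$ lies in each asset interval is not needed, since a Minkowski sum of intervals is always an interval.

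Your explicit witness is power-feasible, but not for the reasons you state. Working it out:
\begin{itemize}
\item For $r\ge\overline P_{\text{batt}}$: $P_{\text{CL}}[k]=0$, $P^{\text{c}}_{\text{PV}}[k]=r-\overline P_{\text{batt}}$, $P_{\text{batt}}[k]=\overline P_{\text{batt}}$.
\item For $\overline P_{\text{batt}}-\overline P_{\text{CL}}\le r<\overline P_{\text{batt}}$: $P^{\text{c}}_{\text{PV}}[k]=0$, $P_{\text{CL}}[k]=\overline P_{\text{batt}}-r$, $P_{\text{batt}}[k]=\overline P_{\text{batt}}$.
\item Below that: $P_{\text{CL}}[k]=\overline P_{\text{CL}}$, $P^{\text{c}}_{\text{PV}}[k]=0$, $P_{\text{batt}}[k]=r+\overline P_{\text{CL}}$.
\end{itemize}
So the battery is pinned at full discharge on the upper two regions and free on the lowest one. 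That is not the case structure you describe.

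More seriously, PV is fully curtailed whenever $r<\overline P_{\text{batt}}$. This includes $r=P_0[k]$ whenever $P_{\text{PV}}[k]<2\overline P_{\text{batt}}+\overline P_{\text{CL}}$. With the values in Table~\ref{tab:HES_params}, the nominal point has PV off, the load at $3$ MW, and the battery discharging at $\tfrac12(P_{\text{PV}}[k]+3)$ MW. This contradicts your remark that PV is only partially curtailed at $P_0[k]$. It also makes the per-step SoC appeal fragile, because the battery discharges continuously at nominal.

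A cleaner witness, in the spirit of \eqref{eq:power_alloc_2}, is proportional. Let $\theta=\Delta P_{\text{hes}}[k]/\bigl(\overline P_{\text{batt}}+\tfrac12(P_{\text{PV}}[k]+\overline P_{\text{CL}})\bigr)\in[-1,1]$, and set $P^{\text{c}}_{\text{PV}}[k]=\tfrac12(1+\theta)P_{\text{PV}}[k]$, $P_{\text{CL}}[k]=\tfrac12(1-\theta)\overline P_{\text{CL}}$ and $P_{\text{batt}}[k]=\theta\,\overline P_{\text{batt}}$. All bounds are then immediate, power balance is a one-line check, and the battery idles at the nominal point.
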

\begin{proof} 
Let $\Delta P_{\text{hes}}[k]$ be extended by $\epsilon > 0$. For the upper bound, let $\Delta P'_{\text{hes}}[k] = \overline P_{\text{batt}} + \frac{1}{2}(P_{\text{PV}}[k] + \overline{P}_{\text{CL}}) + \epsilon$. The resulting power setpoint is $P'_{\text{hes}}[k] = P_0[k] + \Delta P'_{\text{hes}}[k] = P_{\text{PV}}[k] + \overline P_{\text{batt}} + \epsilon$. This exceeds the operational limits of the generation and battery, rendering it unrealizable. By symmetry, an extension to the lower bound exceeds the combined charging and load capacity. Thus, $\mathcal{P}_{\text{curt}}$ is maximal. This case is defined as $S_{4}$ in Table~\ref{tab:comparison}
\end{proof}

\subsection{Scenario 5: asymmetric HES flexibility with curtailment}
\begin{claim}
If symmetry is not the major constraint, flexibility can further be increased with $P_0[k] =0$, allowing the system to utilize full resource capacity i.e.
$\overline{\mathcal{P}}_{\text{curt}} := \left[ -\overline P_{\text{batt}} - \overline{P}_{\text{CL}}, \overline P_{\text{batt}} + P_{\text{PV}}[k]\right]
$
This represents the absolute maximum achievable flexibility when symmetry is not enforced.
\end{claim}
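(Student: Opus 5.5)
We show two things: every value of $\overline{\mathcal{P}}_{\text{curt}}$ is realizable once PV curtailment is allowed, and no value outside it is. With $P_0[k]=0$ we have $P_{\text{hes}}[k]=\Delta P_{\text{hes}}[k]$, so it suffices to compute the exact range of
\[
P_{\text{hes}}[k]=P^{\text{cur}}_{\text{PV}}[k]-P_{\text{CL}}[k]+P_{\text{batt}}[k],
\]
where $P^{\text{cur}}_{\text{PV}}[k]\in[0,P_{\text{PV}}[k]]$ is the delivered (post-curtailment) PV power. The other two variables satisfy $P_{\text{CL}}[k]\in[0,\overline P_{\text{CL}}]$ and $P_{\text{batt}}[k]\in[-\overline P_{\text{batt}},\overline P_{\text{batt}}]$. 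The map is affine and the feasible set is a box, so its image is an interval. The endpoints are obtained coordinatewise.

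\textbf{Maximality.} The supremum uses full PV, the load off and full discharge, giving $P_{\text{PV}}[k]+\overline P_{\text{batt}}$. The infimum uses PV fully curtailed, the load at rated power and full charge, giving $-\overline P_{\text{batt}}-\overline P_{\text{CL}}$. These are the physical limits already used in the proofs of Claim~\ref{claim_Max_HES} and the Scenario~4 claim, now with the lower limit lowered by curtailment. Any $\epsilon>0$ extension beyond either endpoint therefore exceeds the box image and is unrealizable, with the same one-line argument as in those proofs. Since no symmetry is imposed, the two ends are maximized independently. This is why the set strictly contains $\mathcal{P}_{\text{curt}}$: its width $2\overline P_{\text{batt}}+P_{\text{PV}}[k]+\overline P_{\text{CL}}$ equals that of $\mathcal{P}_{\text{curt}}$, but it need not be centered at $P_0=0$. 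More importantly, no interval-valued flexibility set can exceed this box image, so it is absolute.

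\textbf{Realizability.} We give an explicit prioritized allocation in the spirit of \eqref{eq:power_alloc}, splitting on the sign of $\Delta P_{\text{hes}}[k]$.
\begin{itemize}
\item For $\Delta P_{\text{hes}}[k]\ge 0$: set $P_{\text{CL}}[k]=0$ and keep PV uncurtailed, then set $P_{\text{batt}}[k]=[\Delta P_{\text{hes}}[k]-P_{\text{PV}}[k]]^{\overline P_{\text{batt}}}_{-\overline P_{\text{batt}}}$. If the battery saturates at $-\overline P_{\text{batt}}$ (which happens only when $\Delta P_{\text{hes}}[k]<P_{\text{PV}}[k]-\overline P_{\text{batt}}$), absorb the remainder by curtailing PV.
\item For $\Delta P_{\text{hes}}[k]<0$: curtail PV fully, then assign the load first, $P_{\text{CL}}[k]=[-\Delta P_{\text{hes}}[k]]^{\overline P_{\text{CL}}}_0$, and let the battery take $P_{\text{batt}}[k]=\Delta P_{\text{hes}}[k]+P_{\text{CL}}[k]\in[-\overline P_{\text{batt}},0]$.
\end{itemize}
In each case we check that the clipping never activates inside $\overline{\mathcal{P}}_{\text{curt}}$, so the power balance \eqref{eq:hes_power_bal} holds exactly. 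SoC feasibility is argued as in the proof following \eqref{eq:power_alloc}.

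\textbf{Main obstacle.} The delicate part is the first branch. We must verify that the curtailment amount $P_{\text{PV}}[k]-\overline P_{\text{batt}}-\Delta P_{\text{hes}}[k]$ lies in $[0,P_{\text{PV}}[k]]$ whenever it is needed, i.e.\ that it is nonnegative exactly when the battery saturates and never exceeds available PV. The upper bound follows from $\Delta P_{\text{hes}}\ge 0\ge-\overline P_{\text{batt}}$. It is cleaner, though, to note the piecewise-linear dispatch is continuous in $\Delta P_{\text{hes}}$ and verify the breakpoints $0$, $P_{\text{PV}}-\overline P_{\text{batt}}$ and $P_{\text{PV}}+\overline P_{\text{batt}}$. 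Alternatively, avoid explicit rules altogether and invoke convexity: the two endpoint dispatches are feasible, and any point in between is realized by a convex combination of them, since the constraint box is convex and the balance is linear. We expect to use this convexity argument as the primary proof of realizability.
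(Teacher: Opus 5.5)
Your maximality step is the paper's proof. The paper argues only that an $\epsilon$-extension beyond either end of $\overline{\mathcal{P}}_{\text{curt}}$ gives a setpoint above $\overline P_{\text{batt}}+P_{\text{PV}}[k]$ or below $-\overline P_{\text{batt}}-\overline P_{\text{CL}}$. That is exactly your box-image computation with delivered PV in $[0,P_{\text{PV}}[k]]$.

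Where you differ is realizability. The paper never proves it for this scenario and leaves it implicit once curtailed PV is a free variable. Your convexity argument closes this direction cleanly. The extreme dispatches $(P_{\text{PV}}[k],0,\overline P_{\text{batt}})$ and $(0,\overline P_{\text{CL}},-\overline P_{\text{batt}})$ are feasible. Their convex combinations sweep the whole interval, provided you combine net battery power and then split it by sign, to avoid simultaneous charge and discharge. Leading with this argument, as you plan, is the right choice.

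One sentence is false and should go: $\overline{\mathcal{P}}_{\text{curt}}$ does not strictly contain $\mathcal{P}_{\text{curt}}$. You note yourself that the widths are equal, which already rules out strict containment. In fact $\overline{\mathcal{P}}_{\text{curt}}=\mathcal{P}_{\text{curt}}+\tfrac12(P_{\text{PV}}[k]-\overline P_{\text{CL}})$, a translate by the Scenario~4 nominal power. So both scenarios give the identical set of achievable $P_{\text{hes}}[k]$. As sets of $\Delta P_{\text{hes}}[k]$, neither contains the other unless $P_{\text{PV}}[k]=\overline P_{\text{CL}}$, in which case they coincide.

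To support ``flexibility can further be increased'', use one of two valid comparisons instead:
\begin{itemize}
\item Around $P_0[k]=0$, the largest symmetric interval is $\pm(\overline P_{\text{batt}}+\min\{P_{\text{PV}}[k],\overline P_{\text{CL}}\})$. $\overline{\mathcal{P}}_{\text{curt}}$ strictly contains it whenever $P_{\text{PV}}[k]\neq\overline P_{\text{CL}}$.
\item Relative to $S_1$, the achievable $P_{\text{hes}}[k]$ range extends lower by $P_{\text{PV}}[k]$.
\end{itemize}

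Three smaller points.
\begin{itemize}
\item In your first branch the battery clip at $-\overline P_{\text{batt}}$ does activate when $0\le\Delta P_{\text{hes}}[k]<P_{\text{PV}}[k]-\overline P_{\text{batt}}$; curtailment is what restores the balance. Your sentence saying the clipping never activates needs amending.
\item That dispatch is discontinuous at $\Delta P_{\text{hes}}[k]=0$: delivered PV jumps from $\min\{P_{\text{PV}}[k],\overline P_{\text{batt}}\}$ to $0$. The breakpoint shortcut would therefore need both one-sided checks.
\item Do not import the SoC step from the proof after \eqref{eq:power_alloc}. Power bounds plus a currently feasible SoC do not keep $E[k+1]$ in $[\underline{E},\overline{E}]$; for example, a battery at $\underline{E}$ cannot discharge at all. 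The claim, like the paper's proof of it, concerns power limits only, so state that SoC is assumed non-binding.
\end{itemize}
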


\begin{proof} 
With $P_0[k]=0$, any flexibility request $\Delta P_{\text{hes}}[k]$ outside $\overline{\mathcal{P}}_{\text{curt}}$ by $\epsilon > 0$ maps directly to a power setpoint $P_{\text{hes}}[k]$ that exceeds either the maximum output ($\overline P_{\text{batt}} + P_{\text{PV}}[k]$) or maximum consumption ($-\overline P_{\text{batt}} - \overline P_{\text{CL}}$). Thus, $\overline{\mathcal{P}}_{\text{curt}}$ represents the absolute maximum flexibility of the system. This case is defined as $S_{5}$ in Table~\ref{tab:comparison}
\end{proof}

Table~\ref{tab:comparison} summarizes the resulting flexibility envelopes and respective nominal power levels for all these scenarios. It can be concluded that permitting PV curtailment increases the achievable flexibility envelope. However, this increased flexibility comes at the expense of reduced PV utilization, potentially lowering system efficiency. Therefore, while PV curtailment can enhance dispatchability, it may not be the preferred strategy in scenarios where maximizing renewable energy is prioritized. For any $\Delta P_{\text{hes}}[k] \in \mathcal{P}$, we can directly allocate controllable load and battery to achieve the desired HES output based on different objective using \eqref{eq:power_alloc} or \eqref{eq:power_alloc_2}. Certainly, these allocations are not unique but are sufficient. The defined allocation protocols are then used to dispatch the HES power based on the capacity bid in the market. This $\pm$MW capacity bid for market participation is discussed in the next section.

\section{Frequency regulation capacity of HES}
\label{sec:market_setting}
The frequency regulation market generally operates on a pay-for-performance model where participants (e.g., HES owners) submit a bid for regulation capacity $C$, in MW~\cite{Bolun2018}. The payment is then determined by this bid $C$, a performance score $x_{\text{p}}$, mileage $M_{\text{d}}$ of the regulation signal. The payment, $\lambda_{R}$ is typically calculated as~\cite{Bolun2016}:
\begin{equation}
  \lambda_{\text{R}}= x_{\text{p}}C(\lambda_{\text{C}}+M_{\text{d}}\lambda_{\text{M}}) 
  \label{capacity_credit}
\end{equation}
where $\lambda_{\text{C}}$ and $\lambda_{\text{M}}$  are the clearing prices for the market capacity and mileage, respectively. The term mileage $M_{\mathrm{d}}$, defined as the total absolute movement of the regulation signal \textbf{r} for a given interval~\cite{pjm2025manual12}:
\begin{equation}
    M_{\mathrm{d}} = \sum_{k=0}^{N-1} 
    \left| r[k+1] - r[k] \right|.
\end{equation}
Mileage represents the required movement of the resource and typically influences compensation in markets that explicitly reward ramping capability. In HES case, this bid represents the maximum power from the system’s nominal power $P_0$ that the HES can reliably track. The corresponding HES power at time $k$ is therefore given by:
\begin{equation}
P_{\text{hes}}[k] = P_0[k] + C r[k], \quad \forall k \in \mathcal{K}
\end{equation}

where $C\mathbf{r}$ denotes the awarded regulation signal, with $C$ representing the cleared regulation capacity (MW) and $r[k]$ the normalized regulation signal at $k \in \mathcal{K} = \{0,\ldots,N-1\}$. This implies that the HES is required to modulate its power output within $\pm C$ MW in response to the regulation command. While a `conservative' underbidding of capacity may lead to lost revenue opportunities, `aggressive' overbidding can result in tracking errors, poor performance scores, and financial penalties. Therefore, accurate bid sizing is essential for both compliance and profitability. One bidding approach ensuring maximum HES flexibility for the given hour in the market is given by,
\begin{equation}
C = \frac{\norm{\Delta \textbf{P}_{\text{hes}}}_\infty}{\norm{\textbf{r}}_{\infty}}
\end{equation}
In this straightforward case, where the HES's flexible operating range $\Delta P_{\text{hes}}$ does not depend on variable solar PV output (e.g., as in the first row of Table \ref{tab:comparison}, where $\Delta P_{\text{hes}}$ = $\overline{P}_{\text{batt}}+ \frac{1}{2}\overline{P}_{\text{CL}}$), the C is readily determined as $\overline{P}_{\text{batt}}+ \frac{1}{2}\overline{P}_{\text{CL}}$. However, in cases where the available HES flexibility $\Delta P_{\text{hes}}$ depends explicitly on the PV output $P_{\text{PV}}[k]$, capacity bidding becomes more challenging due to the strong temporal and seasonal variability of solar generation, as shown in Fig.~\ref{fig:PV variability}. Since regulation capacity bids are submitted in the hour-ahead market, this variability introduces uncertainty in the realizable flexibility during real-time operation. A detailed optimal capacity bidding strategy under uncertainty, formulated using chance constraints, is presented by the authors in \cite{mishra2025}. To ensure reliable participation and maintain quality of service under such conditions, system operators rely on performance-based scoring metrics, as discussed in Section~\ref{PJM_market} for the PJM frequency regulation market, which remains one of the highest-priced regulation markets in the United States~\cite{ANL_survey_2016}.

\subsection{Optimal HES Dispatch}
\label{subsec:optimization_framework}
To benchmark the performance of the proposed rule-based real-time power allocation protocol and to establish an upper bound on the achievable regulation performance, we formulated an optimization-based dispatch policy for HES participation in the frequency regulation market. The problem is posed as a mixed-integer program (MIP) over a discrete time horizon indexed by $k \in \mathcal{K} = \{0,\ldots,N-1\}$. Among the HES configurations summarized in Table~\ref{tab:comparison}, the optimization framework is developed for the case that maximizes flexibility without curtailing PV generation (row~1), as this scenario is the most relevant for practical market participation. The same formulation can be readily extended to other configurations with minor modifications to the asset constraints. The complete mathematical formulation of the optimization problem is presented below. 

The objective is to minimize the tracking error between the HES flexible power response and the awarded regulation signal, expressed as $C\mathbf{r}$. This formulation assumes perfect knowledge of the regulation signal, perfect solar forecast and resource availability over the optimization horizon and therefore serves as a performance benchmark rather than a real-time implementable controller.

\textbf{Objective Function:}
\begin{equation}
\min_{\mathbf{y}}~~\sum_{k=0}^{N-1} \lvert Cr[k] - \underbrace{P_{\text{hes}}[k]+ P_{0}[k]}_{\text{flexible component }}\rvert\,,  \\
\label{Objective}
\end{equation}

\textbf{Subject to:}
\begin{subequations}
\begin{align}
    P_{\text{hes}}[k] -P_{\text{batt}}[k] - P_{\text{PV}}[k] + P_{\text{CL}}[k] = 0 &\\
    P_{0}[k]-P_{\text{PV}}[k]+\frac{1}{2}\overline{P}_{\text{CL}} = 0&\\
    P_{\text{batt}}[k] - P^{\text{c}}_{\text{batt}}[k] - P^{\text{d}}_{\text{batt}}[k] = 0&\\
    E[k+1] - E[k] + \frac{\Delta t}{E_{\text{C}}} \Big( \eta P^{\text{c}}_{\text{batt}}[k] 
    + \frac{1}{\eta} P^{\text{d}}_{\text{batt}}[k] \Big) = 0 &\\
    E[0]=E_0 &\\
    P^{\text{d}}_{\text{batt}}[k] - Z[k] \overline{P}_{\text{batt}} \leq 0& \label{batt_bounds}\\
    -P^{\text{c}}_{\text{batt}}[k] - (1-Z[k])  \overline{P}_{\text{batt}} \leq 0& \\
    P^{\text{c}}_{\text{batt}}[k] \leq 0& \label{Ch_bounds} \\
    0 \leq P^{\text{d}}_{\text{batt}}[k] & \label{Dis_bounds} \\
    \underline{E} \leq E[k+1] \leq \overline{E}  \label{SoC_bounds} \\
    Z[k] \in \{0, 1\}&\\
     0 \leq P_{\text{CL}}[k] \leq \overline{P}_{\text{CL}}& \label{CL_bounds}
\end{align}
\label{Constraints}
\end{subequations}

where the decision variables $\mathbf{y}$ consist of $P_{\text{CL}}[k]$, $P_{\text{batt}}^{\text{d}}[k]$, $P_{\text{batt}}^{\text{c}}[k]$, and $Z[k]$, for all times $k\in\mathcal{K}$. The problem is subject to a set of operational and physical constraints given by \eqref{Constraints}. These include: the overall HES power balance, battery charge ($P^{\text{c}}_{\text{batt}}[k]$) and discharge ($P^{\text{d}}_{\text{batt}}[k]$) components, enforced by the binary variable $Z[k]$ to prevent simultaneous charging and discharging~\cite{Mazen2023}. The inequality constraints define the bounds on maximum power and SoC respectively through \eqref{batt_bounds}-\eqref{CL_bounds}. Although derived for a closely related optimization problem, the KKT conditions in \cite{mishra2025} reveal structural properties of the optimal solution that align with the proposed real-time, rule-based dispatch. This correspondence shows that, under non-binding SoC constraints, the online power allocation achieves the same performance as the optimal policy.
\vspace{-0.25cm}
\section{Case Study: HES Participation in the Frequency Regulation Market}
\label{sec:PJM_case_study}
To validate the proposed maximum flexibility characterization and the associated real-time power allocation protocol, we have done the simulation studies using one year of PJM frequency regulation data sampled at 2~$s$ resolution. To realistically represent solar generation, high-resolution site-measured irradiance data are processed through an equivalent-circuit PV cell model~\eqref{PV_Cell_Power}, and the resulting output is scaled to MW-level PV capacity. The resulting PV profile captures both diurnal and seasonal variability, which directly impacts the available flexibility of the HES.

The battery and controllable load models explicitly account for power limits, energy capacity, and inverter efficiencies, as summarized in Table~\ref{tab:HES_params}. Battery SoC dynamics are modeled using charging and discharging efficiencies, and operational constraints are enforced at each dispatch interval $k \in \mathcal{K}$ using \eqref{battery model}. All HES assets are assumed to exhibit sufficiently fast response relative to the 2-second regulation signal, such that communication and actuation delays are negligible. This assumption is consistent with PJM requirements for fast-responding regulation resources and allows the analysis to focus on power allocation and SoC management rather than device-level dynamics. The relevant PJM frequency regulation market settings and performance metrics are detailed in the following subsection.

\begin{table}[!t]
\centering
\caption{HES Parameters}
\begin{tabular}{l c}
\hline
\textbf{Component} & \textbf{Value} \\
\hline
PV rated power $\overline{P}_{\!\text{PV}}$ & 3 MW \\
Controllable load rated power $\overline{P}_{\text{CL}}$ & 3 MW \\
Battery rated power $\overline{P}_{\text{batt}}$& 5 MW \\
Battery energy capacity & 5 MWh \\
Battery round-trip efficiency $\eta_{\text{d}} =\eta_{\text{c}} $ & 95\% \\
Initial SoC  $E_{0}$ & 0.50 p.u.\\
Max SoC  $\overline{E}$ & 0.90 p.u.\\
Min SoC  $\underline{E}$ & 0.10 p.u.\\
\hline
\end{tabular}
\label{tab:HES_params}
\end{table}
\subsection{Frequency Regulation Market setting}
\label{PJM_market}

The Pennsylvania-New Jersey-Maryland Interconnection (PJM) is a Regional Transmission Organization (RTO) that coordinates the movement of wholesale electricity across all or parts of thirteen states and the District of Columbia. PJM operates one of the largest competitive wholesale electricity markets in the world and is responsible for procuring and dispatching frequency regulation service to maintain system balance. Resources participating in PJM's regulation market are evaluated using performance score~\cite{pjm2025manual12}.

PJM previously offered two regulation products, slow Reg-A and fast Reg-D; however, recent market reforms have consolidated these into a single frequency regulation product with a unified, fast-responding control signal. Under the updated framework, resource performance is primarily evaluated using a precision-based metric that quantifies how accurately a resource follows PJM’s real-time regulation signal. Due to the limited availability of post-reform regulation data, this study uses the historical Reg-D signal as a representative fast regulation signal \textbf{r}, which is inherently bounded in the range of $\pm 1$~MW; this signal is treated as approximately energy neutral over short intervals and reflects the system operator’s real-time balancing requirement. The performance score is calculated as
\begin{equation}
    x_{\text{p}} = 1 - \frac{\lVert C\textbf{r} - \textbf{P}_{\text{hes}} \rVert_1}{C\lVert \textbf{r} \rVert_1}
    \label{xp}
\end{equation}
The score represents the normalized tracking accuracy, with $x_{\text{p}} = 1$ indicating perfect alignment between the commanded and delivered power. PJM applies performance thresholds to ensure dependable regulation service. 
A performance score of at least $75\%$ is typically required to pass a regulation qualification test. Hence, we do not award any regulation revenue when the $x_{\text{p}}$ falls below 0.75. In our study, we assumed fast-responding assets with negligible delay and no ramp-rate limitations. Hence, the revenue becomes effectively proportional to the capacity bid $C$ and the achieved performance score $x_{\text{p}}$. Therefore, mileage $M_{\text{d}}$ is not used as a decision factor in this analysis.

\vspace{-0.25cm}
\subsection{Flexibility assessment for different scenarios}
Figs.~\ref{fig:P0_minipage}-\ref{fig:Delta_P_minipage} show the nominal operating points and corresponding flexibility tubes, defined by $\pm \Delta P_{\text{hes}}$, for the different HES scenarios $S_{\text{1}}$ to $S_{\text{5}}$ summarized in Table~\ref{tab:comparison}. Each scenario reflects a distinct nominal power $P_{0}$ and flexible $\pm P_{\text{hes}}$ among PV generation, battery storage, and controllable load, resulting in varying degrees of upward and downward regulation capability. The scenario~$S_{\text{3}}$ exhibits the lowest flexibility, as the controllable load absorbs the entire PV output, effectively reducing the HES to a battery only resource with limited up and down flexibility. In contrast, Scenario~$S_{\text{5}}$ achieves the highest flexibility by allowing PV curtailment, thereby enabling the PV subsystem to act as a fully controllable generation resource. This results in an asymmetric flexibility envelope, making scenario ~$S_{\text{5}}$ particularly suitable for market frameworks where upward and downward regulation capacities can be offered independently. Both these scenarios have zero nominal power $P_{0}$. However, when PV curtailment is not permitted, scenario~$S_{\text{1}}$ provides the maximum achievable flexibility by optimally sharing power between the PV, controllable load and battery. Fig.~\ref{fig:S1_minipage} shows the flexibility tube for scenario~~$S_{\text{1}}$, where the nominal operating point is defined as $P_{0}[k] = P_{\text{PV}}[k] - \frac{1}{2}\overline{P}_{\text{CL}}$, yielding symmetric headroom for both upward and downward regulation.

\begin{figure}[!htb]
\centering
\subfloat[Nominal Power $P_{0}$]
{%
\includegraphics[width=0.48\textwidth]{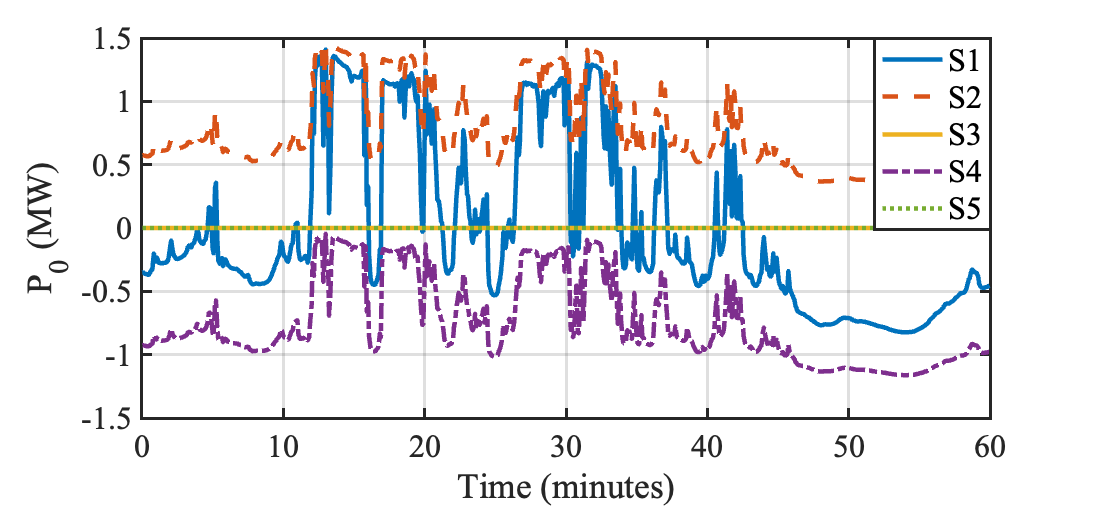}
    \label{fig:P0_minipage}
}
\hfill
\subfloat[Flexibility tubes for $\Delta P_{\text{hes}}$]{%
    \includegraphics[width=0.48\textwidth]{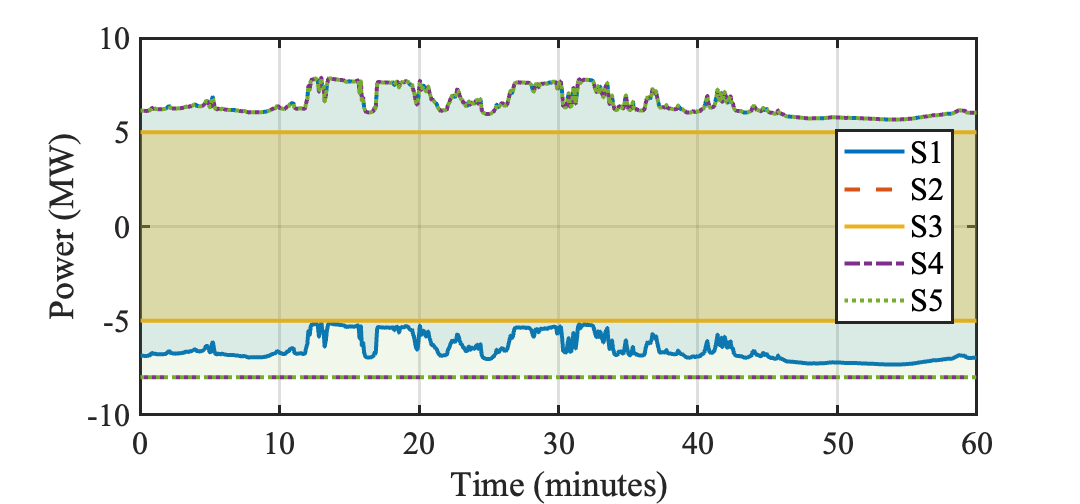}
    \label{fig:Delta_P_minipage}
}
\vspace{1em}
\subfloat[Illustrating HES capacity for scenario $S_1$: $P_{0} \pm \Delta P_{\text{hes}}$]{%
\includegraphics[width=0.48\textwidth]{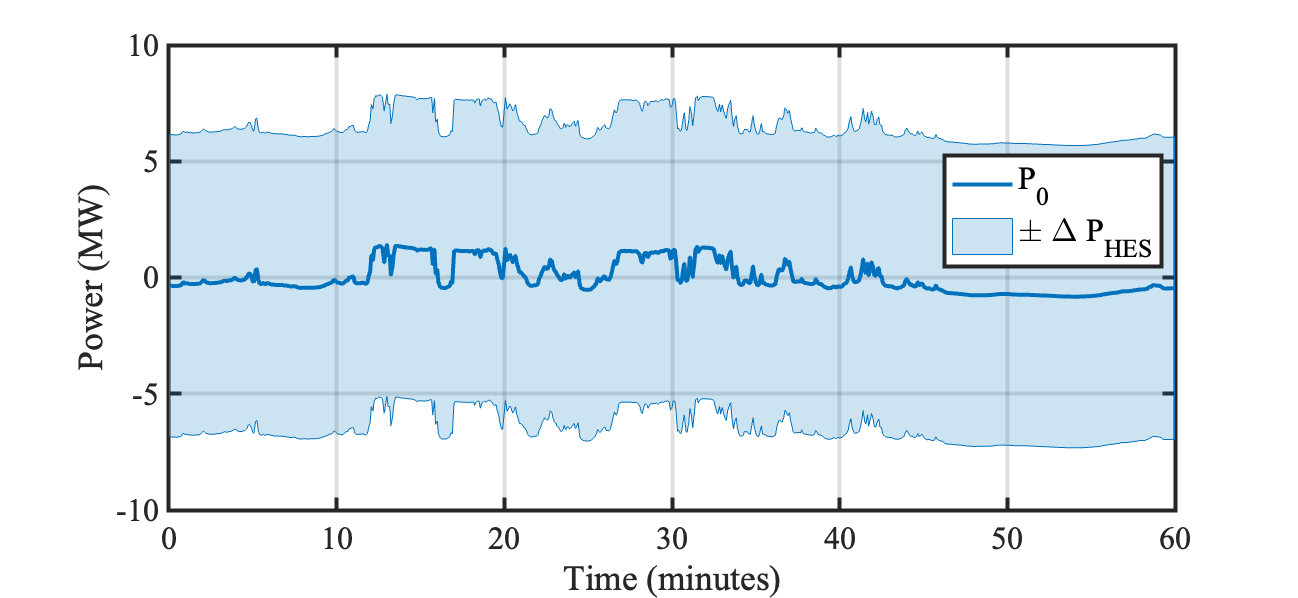}
    \label{fig:S1_minipage}
}
\caption{Flexibility tube defined around the nominal power trajectory $P_{0}[k]$ using the bounds $\pm \Delta P_{\text{hes}}[k]$ over a one-hour horizon for different scenarios as given in Table \ref{tab:comparison}. The HES is expected to track any power trajectory within this tube by ramping up or down from $P_{0}[k]$. In (c), $P_{0}[k] =P_{\text{PV}}[k] - \frac{1}{2}\overline{P}_{\text{CL}}$ for scenario $S_1$.}
\label{fig:Flexibility tube}
\end{figure}

\subsection{Real-time power allocation vs optimal dispatch for maximum flexibility}
To benchmark the real-time power allocation protocol (Section~\ref{subsec:allocation_protocol}), we compared it against the offline optimization framework (Section~\ref{subsec:optimization_framework}) using a bid capacity of $C = 6.5$~MW based on the maximum $\pm \Delta P_{\text{hes}}$. The MILP optimization problem is solved using \textit{Gurobi} solver.
Fig.~\ref{fig:HES_power} shows that both strategies track the regulation signal ``REG~SIGNAL" C\textbf{r} with identical precision, achieving a performance score $x_{\text{p}}$~=~1.0. 
Figs.~\ref{fig:Controllable load_power} and~\ref{fig:Battery_dispatch} compare the allocation of power among HES components between the optimal and real-time protocol, and Fig.~\ref{fig:SoC_opti} shows the battery SoC for that interval. While there are non-unique optimal allocations between PV power and controllable load consumption, the combined contribution $P_{\text{PV}} + P_{\text{H}_2}$ remains the same for both policies. It can be observed from this comparative study that the proposed real-time strategy achieves the same performance as the offline optimal dispatch strategy \eqref{Objective}-\eqref{Constraints} while remaining implementable in real time.

\begin{figure}[!htb]
\centering
\subfloat[Regulation signal and HES power]{%
    \includegraphics[width=0.48\textwidth]{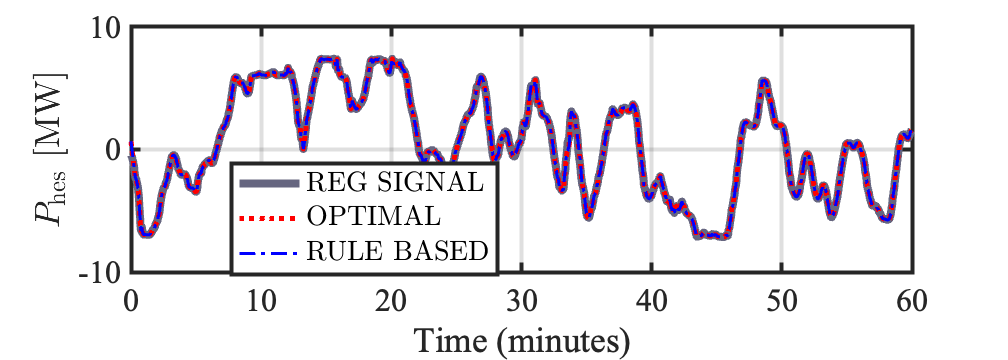}
    \label{fig:HES_power}
}
\hfill
\subfloat[Net controllable load and PV power: $P_{\text{pv}}$+$P_{\text{H}_2}$]{%
    \includegraphics[width=0.48\textwidth]{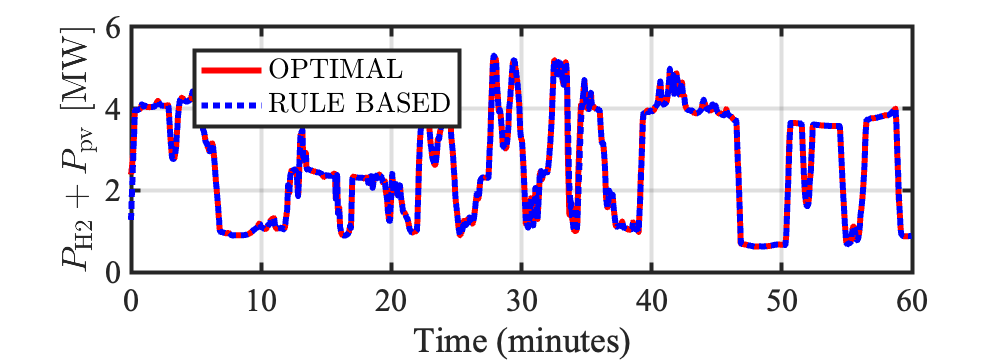}
    \label{fig:Controllable load_power}
}
\vspace{1em}
\subfloat[Battery net power dispatch $P_{\text{batt}}$]{%
    \includegraphics[width=0.48\textwidth]{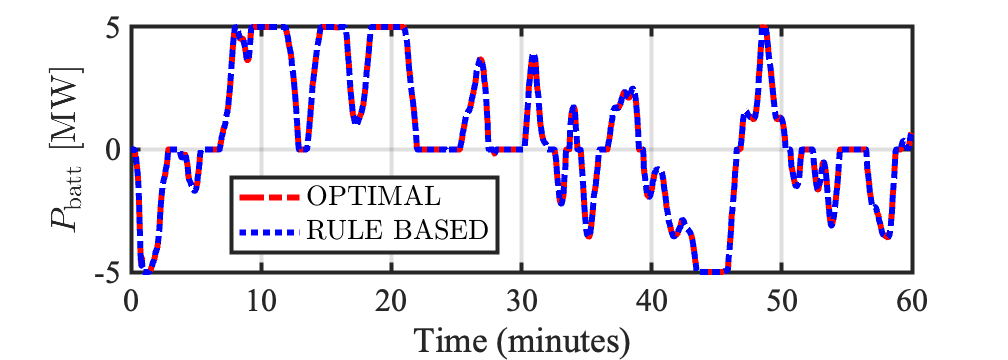}
    \label{fig:Battery_dispatch}
}
\hfill
\subfloat[Battery SoC $E$]{%
    \includegraphics[width=0.48\textwidth]{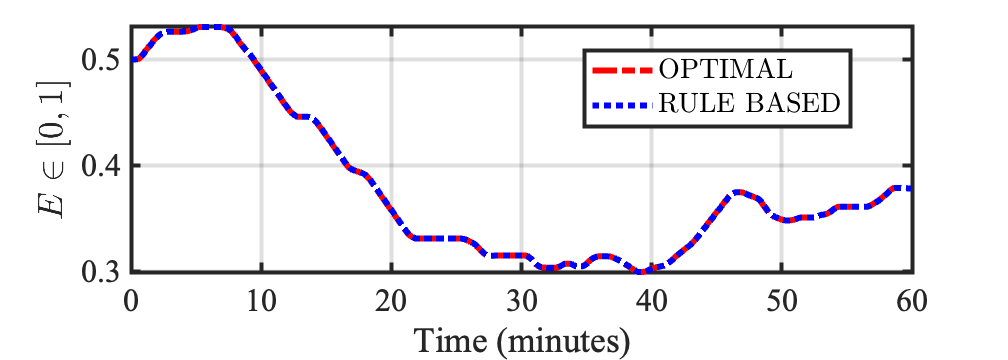}
    \label{fig:SoC_opti}
}
\caption{Comparison of real-time power allocation rule based dispatch and optimal offline dispatch for one representative hour with $C=\Delta P_{\text{hes}}$ = 6.5 MW. The plots show the net regulation signal $C\mathbf{r}$, $P_{\text{hes}}$, $P_{\text{gen}}$+ $P_{\text{CL}}$, $P_{\text{batt}}$,and battery SoC $E$. The performance $x_{\text{p}}=1$ for both strategies.}
\label{Opti_Flex}
\end{figure}

\subsection{Capacity bidding based on season variation in PV power}
Previously, maximum flexibility was achieved at the cost of continuous load operation, often requiring grid imports when PV power was unavailable. However, if purely renewable-driven operation is prioritized, such as scenario~$S_{2}$ offers the maximal sustainable flexibility. In this context, the capacity bid $C$ should also incorporate the multi-timescale (seasonal and intra-day) solar variability as shown in Fig.~\ref{fig:PV variability}. To address this, the bid can be decomposed into fixed and variable components: $C~=C_{0}+\hat{C}_{\text{PV}}$. Here, $C_{0}$ represents the power capacity based on the battery rating, while $\hat{C}_{\text{PV}}$ is determined by the solar forecast for the upcoming frequency regulation interval.

\begin{figure}[!htb]
\centering
\begin{minipage}{0.5\textwidth}
\centering
\includegraphics[width=\textwidth]{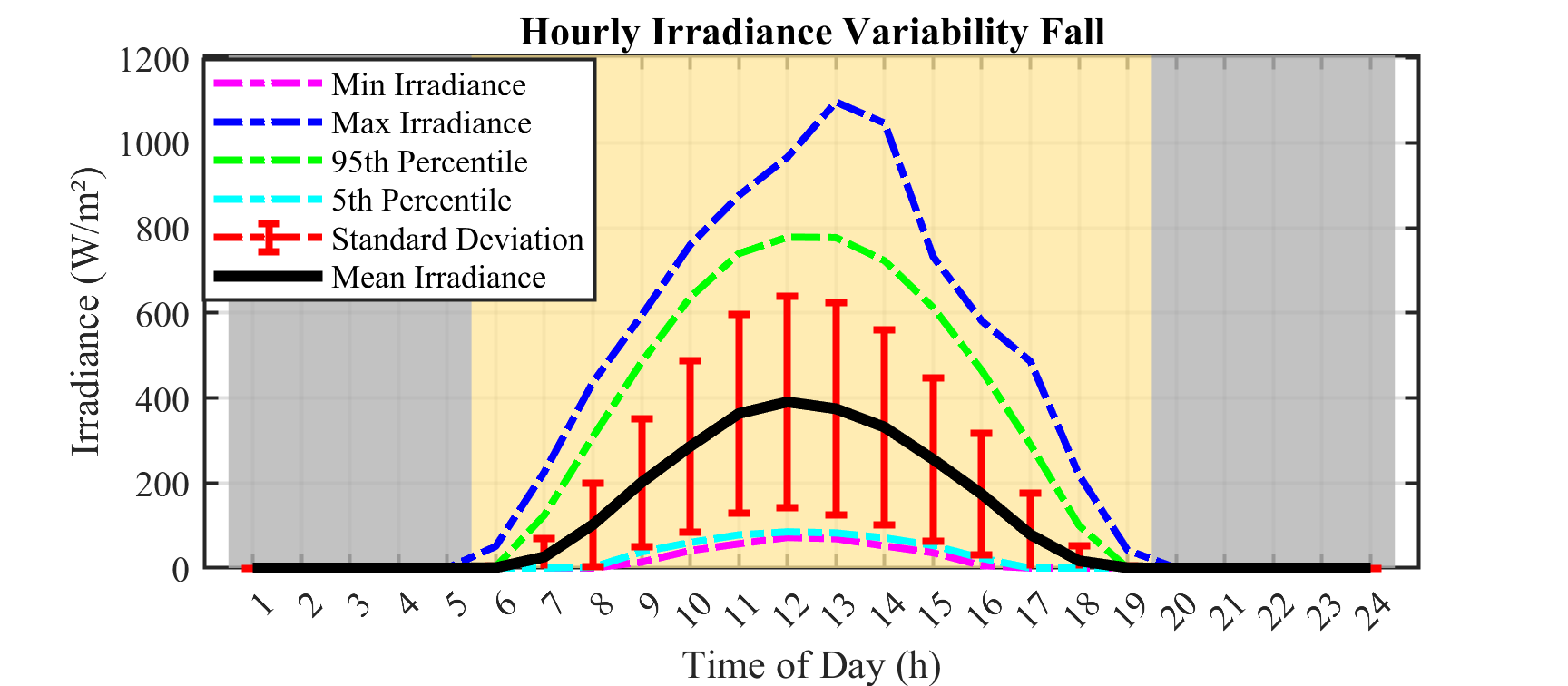}
\label{}
\end{minipage}
\hfill 
\begin{minipage}{0.5\textwidth}
\centering
\includegraphics[width=\textwidth]{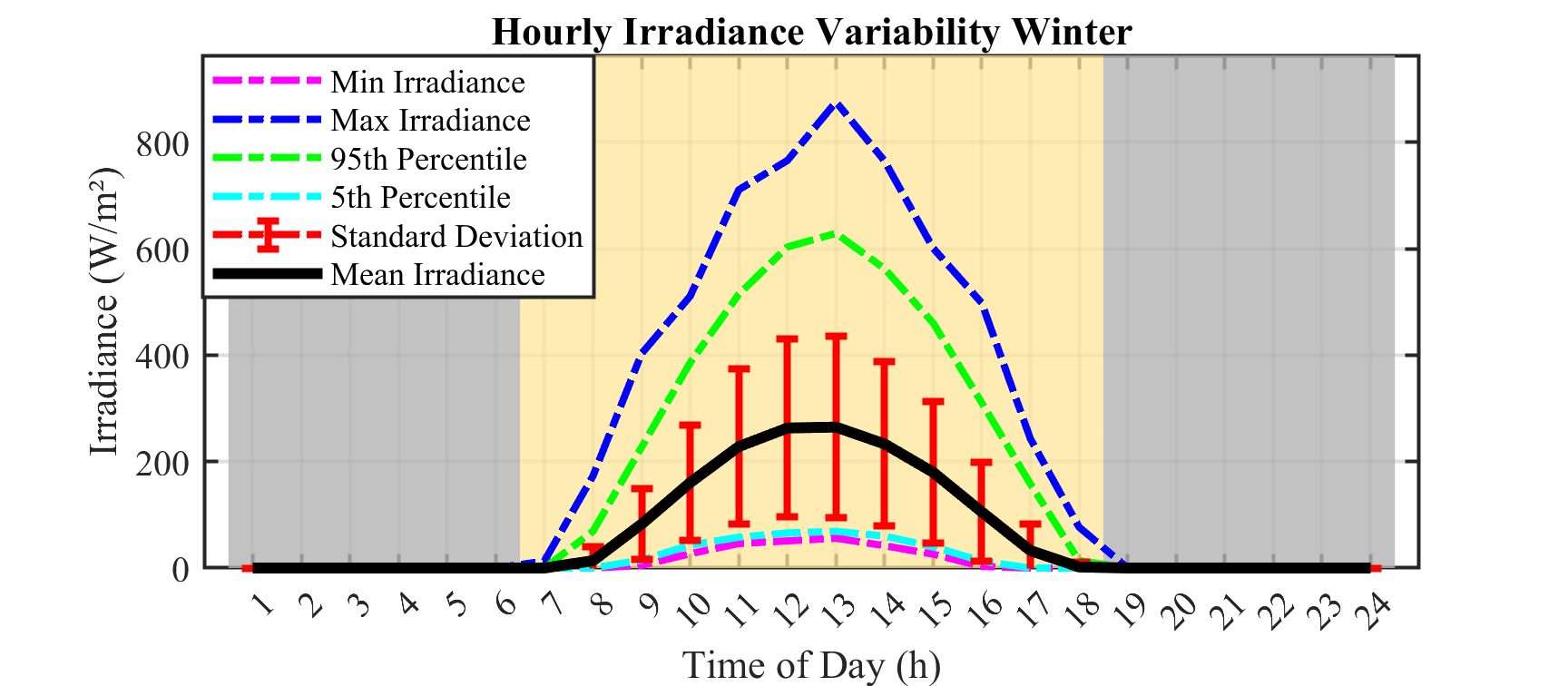}
\label{}
\end{minipage}

\vspace{1em} 
\begin{minipage}{0.5\textwidth}
\centering
\includegraphics[width=\textwidth]{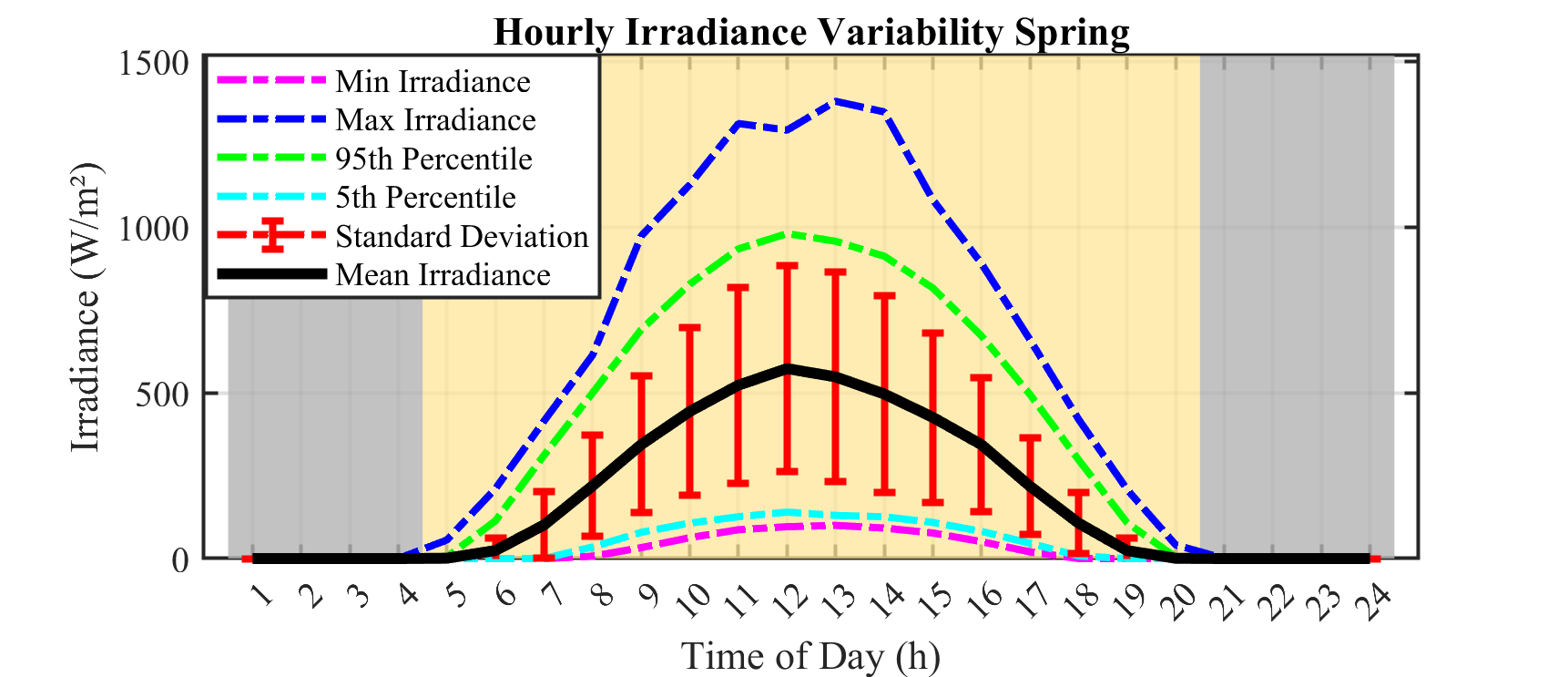}
\label{}
\end{minipage}
\hfill
\begin{minipage}{0.5\textwidth}
\centering
\includegraphics[width=\textwidth]{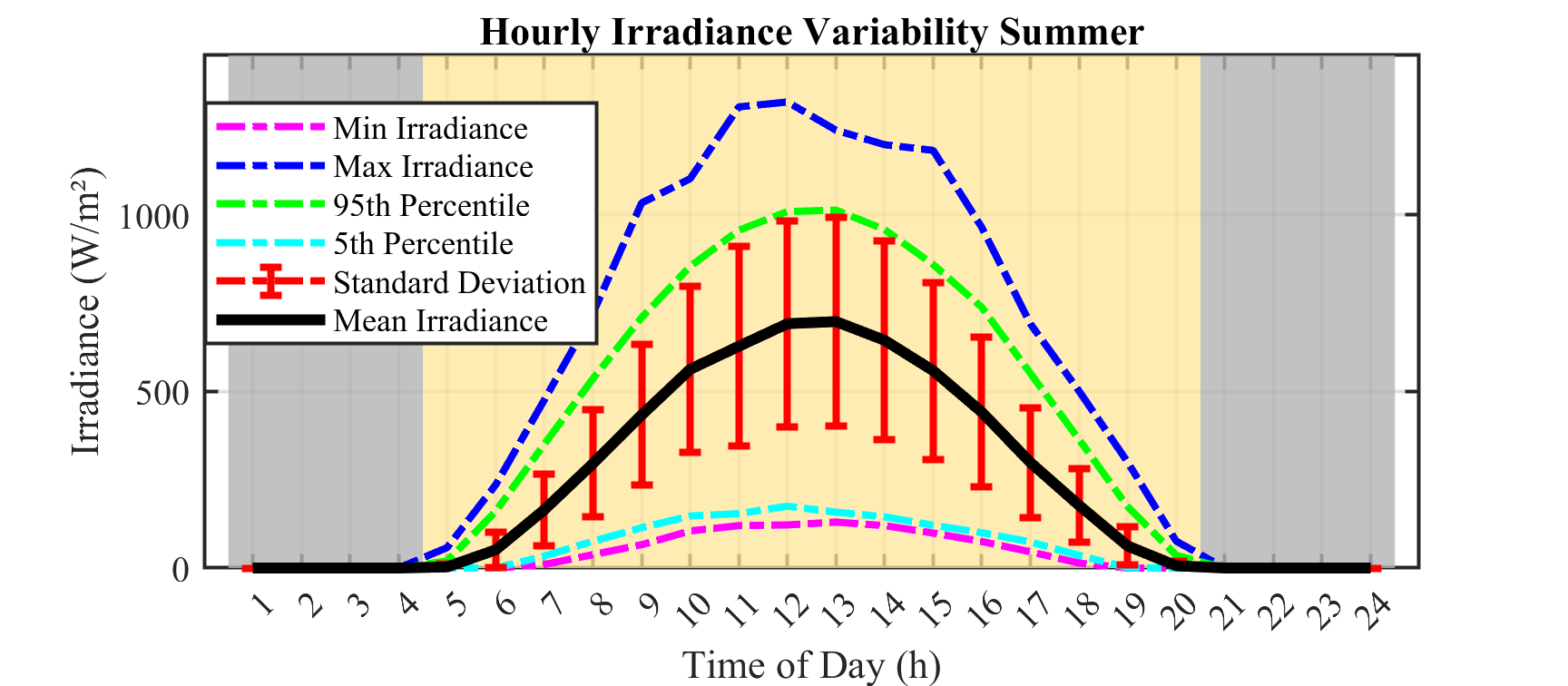}
\label{Solar variability}
\end{minipage}

\caption{Hourly Solar Irradiance Profiles: seasonal variation in mean, percentiles, extremes, and standard deviation for fall, winter, spring and summer for one year.}
\label{fig:PV variability}
\end{figure}

\begin{figure}[!htb]
\centering
\includegraphics[width=0.48\textwidth]{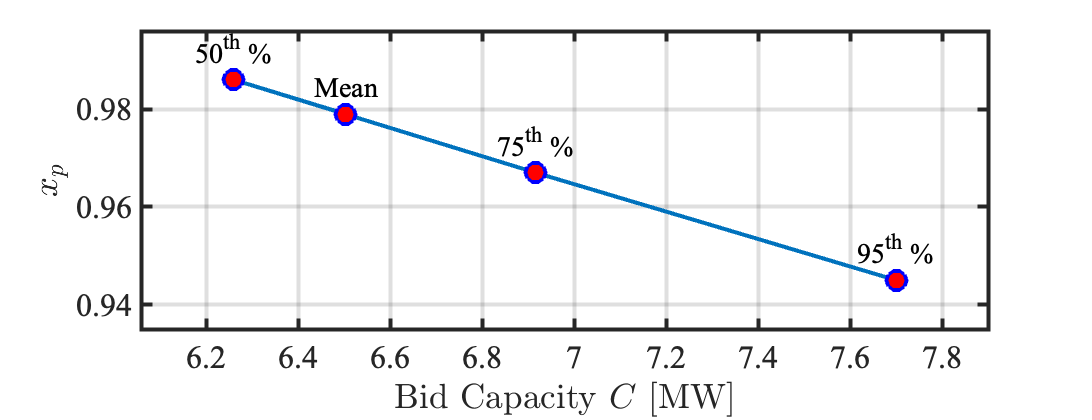}
\caption{Impact of PV statistic selection on HES Bid Capacity (C) and $x_{\text{p}}$. Markers indicate the PV power percentile (Mean, 50th, 75th, 95th) used to size the regulation bid.}
\label{fig:PV_bid}
\centering
\end{figure}

Implementing this decomposed bidding strategy requires selecting an appropriate statistical metric for $\hat{C}_{\text{PV}}$. Fig.~\ref{fig:PV_bid} shows the sensitivity of system performance to different statistical values of $P_{\text{PV}}$, explicitly highlighting the trade-off between maximizing bidding capacity and maintaining the performance. In this case, where PV rating is approximately $66.67\%$ of the battery size, the battery provides a substantial buffer against solar intermittency. However, if the PV capacity were to significantly exceed the battery rating, solar variability would become the dominant constraint, severely limiting the feasible regulation capacity. Consequently, a robust bidding strategy must dynamically adjust $\hat{C}_{\text{PV}}$ to align the regulation commitment with the available battery storage.

\subsection{State of charge drift correction for multi-hour participation}
\label{SoC_drift_correction}
The allocation protocol proposed in Section~\ref{subsec:allocation_protocol} effectively maximizes tracking performance within a given market interval. However, sustained multi-hour participation in frequency regulation introduces an additional challenge of preserving battery SoC across consecutive intervals. Since the terminal SoC of one interval directly determines the admissible operating region of the next, cumulative energy deviations can lead to SoC drift and, ultimately, battery saturation and limit the battery participation in frequency regulation. Addressing this inter-temporal coupling is therefore essential for reliable long-horizon HES participation in regulation markets. While Reg-D signals are designed to be energy-neutral over extended horizons, short-term stochastic imbalances frequently cause the SoC to drift from its initial setpoint $E_{0}$ or designated target band (Fig.~\ref{fig:SoC_opti}). Significant deviations can drive the SoC to its operational limits ($\underline{E}$ or $\overline{E}$), compromising the HES's ability to provide continuous service. 

To mitigate this risk and ensure sustained market participation, we propose a real-time adaptive SoC drift correction approach. For the allocated battery power $P^{*}_{\mathrm{batt}}[k]$, this approach dynamically regulates the maximum battery power whenever the SoC enters a defined buffer region ($\Delta_{\text{b}}$) near its upper ($E_{\text{u}}$) or lower ($E_{\text{l}}$) bounds. The buffer is defined relative to the target band, e.g., $\Delta_{\text{b}}$ = 0.1 ($E_{\text{u}}$-$E_{\text{l}}$). The complete drift correction approach, which determines the final battery dispatch for each time step $k\in\mathcal{K}$, is given by Algorithm~\ref{alg:hes_dispatch_short}.

\begin{algorithm}[tb]
\caption{Proposed Real-time SoC Drift Correction Approach}
\label{alg:hes_dispatch_short}
\begin{algorithmic}[1]
\renewcommand{\algorithmicrequire}{\textbf{Given:}}
\Require  
    $P^{*}_{\mathrm{batt}}[k]$, $\overline{P}_{\mathrm{batt}}$, 
    $\eta_{\mathrm{c}}$, $\eta_{\mathrm{d}}$, 
    $E[k]$, $E_{0}$, $E_{\text{u}}$,$E_{\text{l}}$,$\Delta_{\text{b}}$, $\Delta t$,
\State $P_{\mathrm{max}} \gets \overline{P}_{\mathrm{batt}}$

\If{$E[k] > E_{\text{u}}-\Delta_{\text{b}}$}
    \State $k_{u} \gets \max\left\{0,\; \dfrac{E_{\text{u}} - E[k]}{\Delta_{\text{b}}}\right\}$
    \If{$P^{*}_{\mathrm{batt}}[k] < 0$}  
        \State $P_{\mathrm{max}} \gets k_{u}\,\overline{P}_{\mathrm{batt}}$
    \EndIf
\ElsIf{$E[k] < E_{l} + \Delta_{\mathrm{b}}$}
    \State $k_{l} \gets \max\!\left\{0,\; \dfrac{E[k] - E_{l}}{\Delta_{\mathrm{b}}}\right\}$
    \If{$P^{*}_{\mathrm{batt}}[k] > 0$}  
        \State $P_{\mathrm{max}} \gets k_{l}\,\overline{P}_{\mathrm{batt}}$
    \EndIf
\EndIf
\If{$P^{*}_{\mathrm{batt}}[k] \ge 0$}  
    \State $P_{\mathrm{batt}}[k] \gets \min\{P^{*}_{\mathrm{batt}}[k],\; P_{\mathrm{max}}\}$
    \State $\Delta E_{k} \gets \dfrac{P_{\mathrm{batt}}[k]}{\eta_{\mathrm{d}}}\dfrac{\Delta t}{E_{\text{c}}}$
\Else
    \State $P_{\mathrm{batt}}[k] \gets \max\{P^{*}_{\mathrm{batt}}[k],\; -P_{\mathrm{max}}\}$
    \State $\Delta E_{k} \gets P_{\mathrm{batt}}[k]\;\eta_{\mathrm{c}}\dfrac{\Delta t}{E_{\text{c}}}$
\EndIf

\State $E[k{+}1] \gets E[k] - \Delta E_{k}$

\end{algorithmic}
\end{algorithm}

\begin{figure}[!htb]
\centering
\subfloat[Regulation signal and HES power]{%
    \includegraphics[width=0.5\textwidth]{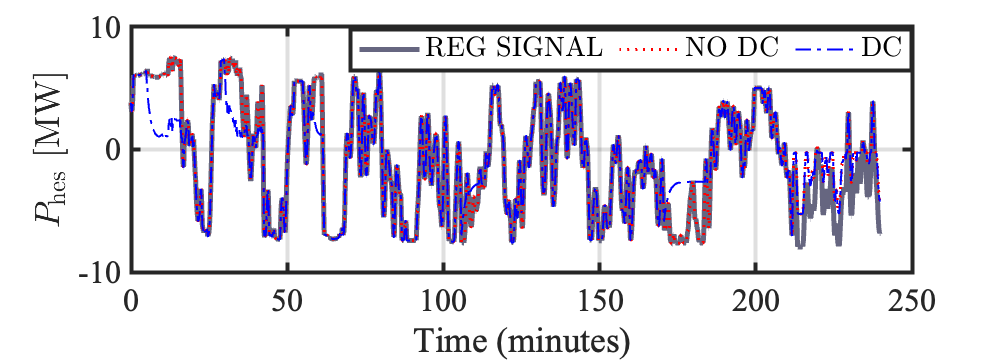}
    \label{fig:HES_power_drift}
}
\vspace{1em}
\subfloat[Battery net power dispatch $P_{\text{batt}}$]{%
    \includegraphics[width=0.5\textwidth]{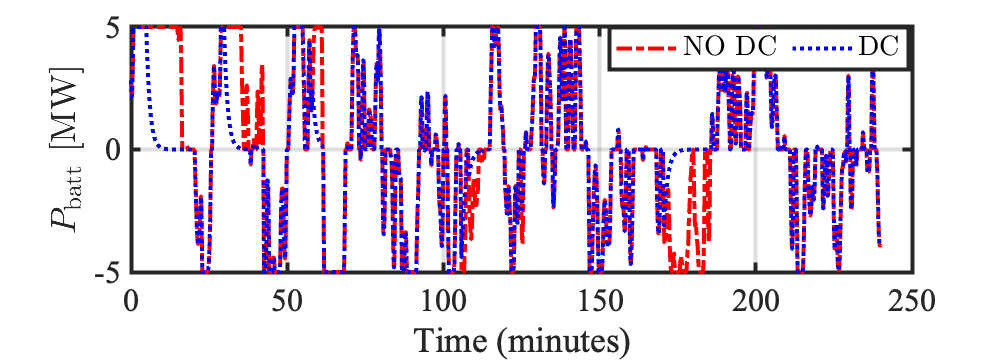}
    \label{fig:Battery_dispatch_drift}
}
\hfill
\subfloat[Battery SoC $E$]{%
    \includegraphics[width=0.5\textwidth]{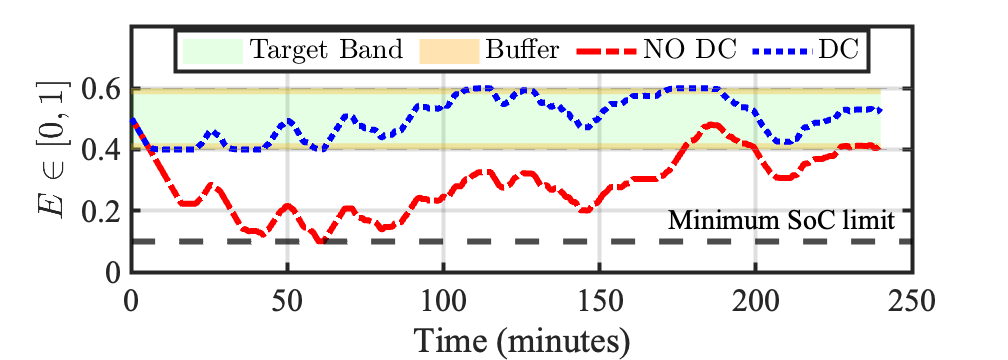}
    \label{fig:SoC_opti_drift}
}
\caption{Comparison of real-time power allocation rule based with (DC) and without SoC drift correction(NO DC) dispatch for scenario $S_{\text{1}}$ with $C$~=~6.5 MW. The plots show the net regulation signal $C\mathbf{r}$, $P_{\text{hes}}$, $P_{\text{batt}}$,and battery SoC $E$. The upper bound $E_{\text{u}}$, lower bound $E_{\text{u}}$ and buffer zone $\Delta_{b}$ are 0.6, 0.4 and 0.02 respectively.}
\label{fig:Drift_correction}
\end{figure}

Fig.~\ref{fig:Drift_correction} shows the efficacy of the real-time power allocation with and without SoC drift correction under scenario~$S_{1}$ (Max Flex – No PV Curtailment) from Table~ \ref{tab:comparison} over a consecutive four-hour horizon. The target band was defined by upper bound $E_{\text{u}}~=~0.6$, lower bound $E_{\text{l}}~=~0.4$ and buffer zone $\Delta_{b}=0.02$. As observed, the baseline strategy without drift correction (``NO DC'') fails to manage the cumulative energy imbalance, causing the battery to hit its minimum operational limit $\underline{E}$~=~0.1. However, the proposed strategy (``DC") detects when the SoC breaches the buffer zone and regulates the dispatch to constrain the SoC within the target region. We define an equivalent performance score $\tilde{x}_{\text{p}}$, which is same as \eqref{xp} but for four consecutive hour. The $\tilde{x}_{\text{p}}$~=~0.97 for the baseline strategy without drift correction, while $\tilde{x}_{\text{p}}$~=~0.89 incorporating the drift correction. Although this strategy results in a slightly lower performance score but prevents SoC operational limit violations, ensuring the sustained availability.

\section{Conclusion}
\label{Conclusion} 
This paper characterizes the notion of maximum power flexibility for HES composed of solar PV, battery storage, and a controllable load. A flexibility envelope was derived to determine the largest up-down regulation range that the HES can reliably provide at each time step $k\in\mathcal{K}$, taking into account operational constraints and PV curtailment. The framework enables real-time power allocation and is designed to support capacity bidding in the PJM's frequency regulation market. The maximality and symmetry of the proposed flexibility envelope were analytically established. Multiple operational scenarios were examined as summarized in Table~\ref{tab:comparison}, including PV curtailment and purely renewable-driven load, each resulting in a different nominal trajectory and flexibility range. The role of PV curtailment in increasing dispatchable range was contrasted with its trade-off against full PV power utilization.
 
This real-time power allocation protocol was benchmarked against an optimization framework to validate optimality of the dispatch. This optimization framework assumes perfect solar generation forecast and regulation signal. While the optimization provides a performance upper bound, the proposed rule based allocation protocol achieves same results with reduced computational overhead, making it suitable for real-time implementation. Further, By integrating solar forecasts directly into the bidding, the framework dynamically sizes the symmetric regulation capacity the HES can offer. This analysis demonstrates that HES can successfully participate in regulation markets even with high solar variability. 
To ensure sustainable HES participation over multi-hour operation, an adaptive SoC drift correction approach is proposed to prevent battery saturation, thereby maintaining operational continuity and enabling practical long-term HES deployment in frequency regulation markets. Future work will advance the proposed framework by developing robust bidding strategies that explicitly internalize the stochastic nature of regulation signal, mitigating the financial risks associated with extreme forecast errors. Further, this expansion will also incorporate heterogeneous ramp-rate limits of different HES assets. Finally, these enhancements will be validated across broader asset classes with complex dynamics, including wind generation, data centers and aggregated EV charging stations as controllable loads.
\section*{Acknowledgment}
The authors thank Dakota Hamilton at the University of Vermont for insightful discussions on flexibility of hybrid energy systems.

\vspace{-0.5cm}
\bibliographystyle{IEEEtran}
\bibliography{Ref}

@ARTICLE{Gnibga2024,
  author={Gnibga, Wedan Emmanuel and Blavette, Anne and Orgerie, Anne-Cécile},
  journal={IEEE Transactions on Sustainable Computing}, 
  title={Renewable Energy in Data Centers: The Dilemma of Electrical Grid Dependency and Autonomy Costs}, 
  year={2024},
  volume={9},
  number={3},
  pages={315-328},
  doi={10.1109/TSUSC.2023.3307790}}

@misc{mishra2025,
      title={Optimal Bidding and Coordinated Dispatch of Hybrid Energy Systems in Regulation Markets}, 
      author={Tanmay Mishra and Dakota Hamilton and Mads R. Almassalkhi},
      year={2025},
      eprint={2510.26602},
      archivePrefix={arXiv},
      primaryClass={eess.SY},
      url={https://arxiv.org/abs/2510.26602}, 
}

@ARTICLE{Wang2016,
  author={Wang, Hao and Huang, Jianwei and Lin, Xiaojun and Mohsenian-Rad, Hamed},
  journal={IEEE Transactions on Smart Grid}, 
  title={Proactive Demand Response for Data Centers: A Win-Win Solution}, 
  year={2016},
  volume={7},
  number={3},
  pages={1584-1596},
  doi={10.1109/TSG.2015.2501808}}

@ARTICLE{Karfopoulos2015,
  author={Karfopoulos, Evangelos L. and Panourgias, Kostas A. and Hatziargyriou, Nikos D.},
  journal={IEEE Transactions on Power Systems}, 
  title={Distributed Coordination of Electric Vehicles providing V2G Regulation Services}, 
  year={2016},
  volume={31},
  number={4},
  pages={2834-2846},
  doi={10.1109/TPWRS.2015.2472957}}

@manual{pjm2025manual12,
  title        = {PJM Manual 12: Operating Agreement Accounting},
  author       = {{PJM Interconnection}},
  year         = {2025},
  month        = {June},
  edition      = {Revision 55},
  url          = {}
}

@techreport{ANL_survey_2016,
  author       = {Zhou, Zhi and Levin, Todd and Conzelmann, Guenter},
  title        = {Survey of U.S. Ancillary Services Markets},
  institution  = {Argonne National Lab. (ANL), IL (USA)},
  doi          = {10.2172/1236451},
  url          = {},
  place        = {},
  year         = {2016},
  month        = {01}}

@INPROCEEDINGS{Bolun2016,
  author={Bolun Xu and Dvorkin, Yury and Kirschen, Daniel S. and Silva-Monroy, C. A. and Watson, Jean-Paul},
  booktitle={2016 IEEE Power and Energy Society General Meeting (PESGM)}, 
  title={A comparison of policies on the participation of storage in U.S. frequency regulation markets}, 
  year={2016},
  volume={},
  number={},
  pages={1-5},
  doi={10.1109/PESGM.2016.7741531}}

@INPROCEEDINGS{Mazen2023,
  author={Elsaadany, Mazen and Almassalkhi, Mads R.},
  booktitle={2023 62nd IEEE Conference on Decision and Control (CDC)}, 
  title={Battery Optimization for Power Systems: Feasibility and Optimality}, 
  year={2023},
  volume={},
  number={},
  pages={562-569},
  doi={10.1109/CDC49753.2023.10384282}}

@ARTICLE{Dang2014,
  author={Dang, Jie and Seuss, John and Suneja, Luv and Harley, Ronald G.},
  journal={IEEE Journal of Emerging and Selected Topics in Power Electronics}, 
  title={SoC Feedback Control for Wind and ESS Hybrid Power System Frequency Regulation}, 
  year={2014},
  volume={2},
  number={1},
  pages={79-86},
  doi={10.1109/JESTPE.2013.2289991}}

@ARTICLE{Garcia2021,
  author={Garcia-Torres, Felix and Bordons, Carlos and Tobajas, Javier and Real-Calvo, Rafael and Santiago, Isabel and Grieu, Stephane},
  journal={IEEE Transactions on Power Systems}, 
  title={Stochastic Optimization of Microgrids With Hybrid Energy Storage Systems for Grid Flexibility Services Considering Energy Forecast Uncertainties}, 
  year={2021},
  volume={36},
  number={6},
  pages={5537-5547},
  doi={10.1109/TPWRS.2021.3071867}}

@ARTICLE{LiHao2018,
  author={Li, Hao and Lu, Zongxiang and Qiao, Ying and Zhang, Baosen and Lin, Yisha},
  journal={IEEE Transactions on Power Systems}, 
  title={The Flexibility Test System for Studies of Variable Renewable Energy Resources}, 
  year={2021},
  volume={36},
  number={2},
  pages={1526-1536},
  doi={10.1109/TPWRS.2020.3019983}}

@ARTICLE{Bolun2018,
  author={Xu, Bolun and Shi, Yuanyuan and Kirschen, Daniel S. and Zhang, Baosen},
  journal={IEEE Transactions on Power Systems}, 
  title={Optimal Battery Participation in Frequency Regulation Markets}, 
  year={2018},
  volume={33},
  number={6},
  pages={6715-6725},
  doi={10.1109/TPWRS.2018.2846774}}

@ARTICLE{Calero2022,
  author={Calero, Ivan and Cañizares, Claudio A. and Bhattacharya, Kankar and Baldick, Ross},
  journal={IEEE Transactions on Smart Grid}, 
  title={Duck-Curve Mitigation in Power Grids With High Penetration of PV Generation}, 
  year={2022},
  volume={13},
  number={1},
  pages={314-329},
  doi={10.1109/TSG.2021.3122398}}

@article{oshaughnessy2020,
  title={Too much of a good thing? Global trends in the curtailment of solar PV},
  author={O'Shaughnessy, Eric and Heeter, Jenny and Fu, Ran and Ardani, Kristen and Margolis, Robert},
  journal={Solar Energy},
  volume={210},
  pages={632--641},
  year={2020},
  publisher={Elsevier},
  doi={10.1016/j.solener.2020.08.056}
}

@article{BIALEK2020,
title = {What does the {GB} power outage on 9 August 2019 tell us about the current state of decarbonised power systems?},
journal = {Energy Policy},
volume = {146},
pages = {111821},
year = {2020},
issn = {0301-4215},
doi = {https://doi.org/10.1016/j.enpol.2020.111821},
author = {Janusz Bialek},
}

@ARTICLE{Sovana2024,
  author={Qudaih, Sovana and Bektas, Zeynep and Guven, Denizhan and Kayakutlu, Gülgün and Kayalica, Mehmet Ozgur},
  journal={IEEE Transactions on Engineering Management}, 
  title={Technology Assessment of Hydrogen Storage: Cases Enabling the Clean Energy Transition}, 
  year={2024},
  volume={71},
  number={},
  pages={5744-5756},
  doi={10.1109/TEM.2024.3366973}}

@ARTICLE{Brahma2021,
  author={Brahma, Sarnaduti and Nazir, Nawaf and Ossareh, Hamid and Almassalkhi, Mads R.},
  journal={IEEE Transactions on Power Systems}, 
  title={Optimal and Resilient Coordination of Virtual Batteries in Distribution Feeders}, 
  year={2021},
  volume={36},
  number={4},
  pages={2841-2854},
  doi={10.1109/TPWRS.2020.3043632}}

@ARTICLE{Brahma2023,
  author={Brahma, Sarnaduti and Khurram, Adil and Ossareh, Hamid and Almassalkhi, Mads},
  journal={IEEE Transactions on Smart Grid}, 
  title={Optimal Frequency Regulation Using Packetized Energy Management}, 
  year={2023},
  volume={14},
  number={1},
  pages={341-353},
  doi={10.1109/TSG.2022.3197703}}

@ARTICLE{Xie2020,
  author={Xie, Yijing and Zhang, Yichen and Lee, Wei-Jen and Lin, Zongli and Shamash, Yacov A.},
  journal={IEEE/CAA Journal of Automatica Sinica}, 
  title={Virtual Power Plants for Grid Resilience: A Concise Overview of Research and Applications}, 
  year={2024},
  volume={11},
  number={2},
  pages={329-343},
}

@article{palensky2011demand,
  title={Demand side management: Demand response, intelligent energy systems, and smart loads},
  author={Palensky, Peter and Dietrich, Dietmar},
  journal={IEEE Transactions on Industrial Informatics},
  volume={7},
  number={3},
  pages={381--388},
  year={2011},
  publisher={IEEE}
}

@ARTICLE{Nosair2015,
  author={Nosair, Hussam and Bouffard, Franois},
  journal={IEEE Transactions on Sustainable Energy}, 
  title={Flexibility Envelopes for Power System Operational Planning}, 
  year={2015},
  volume={6},
  number={3},
  pages={800-809},
  doi={10.1109/TSTE.2015.2410760}}

@article{Tanmay2024,
	author = {Tanmay Mishra and Amritanshu Pandey and Mads R. Almassalkhi},
	issn = {0378-7796},
	journal = {Electric Power Systems Research},
	pages = {110767},
	title = {Predictive optimization of hybrid energy systems with temperature dependency},
	year = {2024},
}

@ARTICLE{Mohandes2019,
  author={Mohandes, Baraa and Moursi, Mohamed Shawky El and Hatziargyriou, Nikos and Khatib, Sameh El},
  journal={IEEE Trans on Power Systems}, 
  title={A Review of Power System Flexibility With High Penetration of Renewables}, 
  year={2019},
  volume={34},
  number={4},
  pages={3140-3155}}

@ARTICLE{Makarov2009,
  author={Makarov, Yuri V. and Loutan, Clyde and Ma, Jian and de Mello, Phillip},
  journal={IEEE Transactions on Power Systems}, 
  title={Operational Impacts of Wind Generation on California Power Systems}, 
  year={2009},
  volume={24},
  number={2},
  pages={1039-1050}
}

@Article{Rosewater2019,
  author  = {Rosewater, David M. and Copp, David A. and Nguyen, Tu A. and Byrne, Raymond H. and Santoso, Surya},
  journal = {IEEE Access},
  title   = {Battery Energy Storage Models for Optimal Control},
  year    = {2019},
  pages   = {178357-178391},
  volume  = {7},
  doi     = {10.1109/ACCESS.2019.2957698},
}

@inbook{Masters2004,
author = {Gilbert M. Masters},
publisher = {John Wiley \& Sons, Ltd},
isbn = {9780471668824},
title = {``Renewable and Efficient Electric Power Systems"},
chapter = {8},
pages = {445-504},
year = {2004},
}

@inproceedings{Arash2025,
  title     = {Towards Input-Convex Neural Network Modeling for Battery Optimization in Power Systems},
  author    = {Arash Omidi and Tanmay Mishra and Mads R. Almassalkhi},
  booktitle = {2025 American Control Conference (ACC)},
  year      = {2025},
}

@ARTICLE{Ferreira2019,
  author={Rosewater, David and Ferreira, Summer and Schoenwald, David and Hawkins, Jonathan and Santoso, Surya},
  journal={IEEE Transactions on Smart Grid}, 
  title={Battery Energy Storage State-of-Charge Forecasting: Models, Optimization, and Accuracy}, 
  year={2019},
  volume={10},
  number={3},
  pages={2453-2462},
  doi={10.1109/TSG.2018.2798165}}

@TechReport{Murphy2021,
  author   = {Caitlin Murphy and Andrew Mills},
  title    = {Hybrid Energy Systems: Opportunities for Coordinated Research},
  year     = {2021},
}
\end{document}